
\documentclass[preprint,review,12pt]{elsarticle}%
\usepackage{amsmath,amssymb,amsthm,epsfig,graphicx,color}
\usepackage{booktabs}
\usepackage{enumitem}
\usepackage{amsmath}
\usepackage{amsfonts}
\usepackage{amssymb}
\usepackage{graphicx}%
\usepackage[running,switch,pagewise]{lineno}
\setcounter{MaxMatrixCols}{30}

\newcommand{\bu}{{\bf u}}

\newcommand{\BEA}{\begin{eqnarray}}
\newcommand{\EEA}{\end{eqnarray}}
\newcommand{\comment}[1]{}
\newtheorem {theorem} {Theorem}
\newtheorem {proposition} [theorem] {Proposition}

\newtheorem {definition} {Definition}

\newtheorem {example} {Example}

\journal{Journal of Computational Physics}
\begin{document}

\begin{frontmatter}
\title{An Algebraic Method for Constructing Stable and Consistent Autoregressive Filters}
\author{John Harlim}
\ead{jharlim@psu.edu}
\address{Department of Mathematics and Department of Meteorology, the Pennsylvania State University, University Park, PA 16802}
\author{Hoon Hong}
\ead{hong@ncsu.edu}
\author{Jacob L. Robbins\corref{cor}}
\cortext[cor]{Corresponding author.}
\ead{jlrobbi3@ncsu.edu}
\address{Department of Mathematics, North Carolina State University, Raleigh, NC 27695}
\begin{abstract}
In this paper, we introduce an algebraic method to construct stable and consistent univariate autoregressive (AR) models of low order for filtering and predicting nonlinear turbulent signals with memory depth. By stable, we refer to the classical stability condition for the AR model. By consistent, we refer to the classical consistency constraints of Adams-Bashforth methods of order-two. One attractive feature of this algebraic method is that the model parameters can be obtained without directly knowing any training data set as opposed to many standard, regression-based parameterization methods. It takes only long-time average statistics as inputs. The proposed method provides a discretization time step interval which guarantees the existence of stable and consistent AR model and simultaneously produces the parameters for the AR models.
In our numerical examples with two chaotic time series with different characteristics of decaying time scales, we find that the proposed AR models produce significantly more accurate short-term predictive skill and comparable filtering skill relative to the linear regression-based AR models. These encouraging results are robust across wide ranges of discretization times, observation times, and observation noise variances. Finally, we also find that the proposed model produces an improved short-time prediction relative to the linear regression-based AR-models in forecasting a data set that characterizes the variability of the Madden-Julian Oscillation, a dominant tropical atmospheric wave pattern.
\end{abstract}
\begin{keyword}
autoregressive filter; Kalman filter; parameter estimation; model error
\end{keyword}
\end{frontmatter}

\section{Introduction}
Filtering or data assimilation is a numerical scheme for finding the best statistical estimate of the true signals from noisy partial observations. In the past two decades, many practical Bayesian filtering approaches \cite{lorenc:86,evensen:94,anderson:01,hunt:07} were developed with great successes in real applications such as weather prediction and assimilating high dimensional dynamical systems. Despite these successful efforts, there is still a long-standing issue when the filter model is imperfect due to unresolved scales, unknown boundary conditions, incomplete understanding of physics, etc. Various practical methods have been proposed to mitigate model errors. In the mildest form, model error arises from misspecification of parameters. In this case, one can, for example, estimate the parameters using a state-augmentation approach in the Kalman filtering algorithm \cite{friedland:69,friedland:82,dds:98}. A more difficult type of model error is when the dynamics (parametric form) of the governing equations of the underlying truth is (partially) unknown. Recently, reduced stochastic models were proposed for filtering with model errors in complex turbulent multiscale systems \cite{mg:12,bcm:13,mh:13,gh:13,hmm:14,bh:14}. While most of these methods are very successful, they require partial knowledge about the underlying processes and they are mostly relevant in many applications with known, but coarsely resolved, governing equations such as climate and weather prediction problems.

In this paper, we consider a challenging situation in which we completely have no access to the underlying dynamics. The only available information are some equilibrium statistical quantities of the underlying systems such as energy and correlation time. The Mean Stochastic Model (MSM) \cite{hm:08b,mgy:10}, an AR model of order-1, 
was the first model designed to mitigate this situation. The resulting model produces accurate filtering of nonlinear signals in the fully turbulent regime with short decaying time (or short memory). Indeed, filtering with MSM was shown to be optimal in the linear and Gaussian settings \cite{bh:14}. For weakly chaotic nonlinear time series with long decaying time scale (memory), higher-order linear autoregressive (AR-p) filters were shown to be more accurate \cite{kh:12b}. In linear and Gaussian setting, the univariate AR-p filter is ``optimal" when the parameters in the model are chosen to satisfy the classical stability criterion and the Adams-Bashforth's consistency conditions of order-2 \cite{bh:13b}. While the stable and consistent AR model was shown to improve the filtering skill of nonlinear systems in some cases, there are examples in which such a model is not even attainable for some choices of $p$ and sampling time $\delta t$ \cite{bh:13b}.  The central contribution in this paper is on an algebraic-based method for constructing (or parameterizing) stable and consistent one-dimensional (univariate) AR models of low order. We require our method to provide an upper bound for discretization time step, $\delta t$, to guarantee the existence of the stable and consistent AR models. Moreover, we require the new method to only take equilibrium statistics as inputs rather than a training data set. We will provide numerical tests on synthetic examples, two Fourier modes of the Lorenz-96 model \cite{lorenz:96} with different characteristic of decaying time scales (or memory depth), and on a data set from real world problem, the Realtime Multivariate MJO (RMM) index \cite{wh:04} that characterizes the dominant wave pattern in the tropical atmosphere, the Madden-Julian Oscillation \cite{zhang:05}.  

This paper is organized as follows: We briefly review the stable and consistent AR model that guarantees accuracy of the filtered solutions in the linear setting when Kalman filter is used in Section~2. Therein, we point out the main issues when this linear result is applied on nonlinear problems. We also include a standard parameterization method for AR models to keep the paper self-contained. In Section~3, we discuss the proposed method for finding stable and consistent AR models in detail using some standard tools from algebraic geometry. In this section, we also provide a numerical example to help illustrate the algorithm. To keep the paper self-contained, we list the relevant definitions and results in algebraic geometry in the Appendix. In Section~4, we will numerically compare the filtered posterior and prior estimates of the newly developed AR models with estimates from the classical, regression-based, AR models. We will summarize the paper in Section~5.

\section{Stable and consistent linear autoregressive Kalman filter}

The goal of this section is to briefly review the main theoretical result in \cite{bh:13b} which states that: the approximate mean estimates from Kalman
filter solutions with a stable and consistent AR model are as accurate as the optimal filter estimates in the linear setting. Then, we will discuss practical issues in the nonlinear setting and motivate the importance of new parameterization algorithms that satisfy the theoretical constraints in \cite{bh:13b} even when the training data set is not directly available to us.

To keep this paper self-contained, we briefly review some necessary background
materials, including: the AR model, a standard linear regression method to
parameterize the AR model, the consistency constraints of Adams-Bashforth methods for numerical approximation of ODE, and the Kalman filtering algorithm with AR model.

\subsection{Linear AR model}

Consider a linear discrete-time autoregressive (AR) model of order-$p$ for $u \in\mathbb{C}$. In vector form, the linear AR model of
order-$p$ is given as follows,
\begin{align}
\mathbf{u}_{m+1} = F_{p} \mathbf{u}_{m} + \mathbf{f} +\mathbf{e}_{m+1},
\label{AR}%
\end{align}
where $\mathbf{u}_{m} = (u_{m-p+1},\ldots,u_{m-1},u_{m})^{\top}$ and
$\mathbf{f} = (0,\ldots,0,f)^{\top}$ are $p$-dimensional vectors of the
temporally augmented variables and a constant forcing term, respectively. In
\eqref{AR}, $m$ denotes discrete time index with an interval, $\delta t=t_{m+1}-t_{m}$, 
which is typically chosen to be the sampling time interval of the given training dataset. 
For an AR model of order-$p$, the  deterministic operator $F_{p}$ is a
$p\times p$ matrix with the following entries,
\begin{align}
F_{p} =
\begin{pmatrix}
0 & 1 & 0 & \cdots & 0 & 0\\
0 & 0 & 1 & \cdots & 0 & 0\\
\vdots & \vdots &  & \ddots & \vdots & \vdots\\
0 & 0 & 0 & \cdots & 0 & 1\\
a_{1} & a_{2} & a_{3} & \cdots & a_{p-1} & 1+a_{p}%
\end{pmatrix}
. \label{F}%
\end{align}
The system noises, $\mathbf{e}_{m} = (0,\ldots,0,\eta_{m})^{\top}$, are
represented by $p$-dimensional vectors, where $\eta_{m}$ are i.i.d.~Gaussian
noises with mean zero and variance $Q$.

Recall that the characteristic polynomial of matrix $F_{p}$ is given by
\begin{align}
\Pi(x) = \sum_{j=1}^{p}{a_{j}}x^{j-1}+x^{p-1}-x^{p}, \label{Pi}%
\end{align}
and it is well known that the zeros of this characteristic polynomial are the
eigenvalues of $F_{p}$. Recall that,

\begin{definition}
\label{stable} The AR model of order-$p$ in \eqref{AR} is stable (wide
sense-stationary) if all of the roots of the characteristic polynomial in
\eqref{Pi} (or equivalently, all of the eigenvalues of matrix $F_{p}$) are in
the interior of the unit circle in a complex plane, that is, solutions of
$\Pi(x)=0$ satisfy $|x|<1$.
\end{definition}

\subsection{Classical parameterization method}

Given time series $\{u_{m}\}_{m=1,\ldots,M}$, a classical method to
parameterize AR model in \eqref{AR} is with a linear regression based
technique known as the Yule-Walker method \cite{brockwell:02}. The Yule-Walker
estimators for $\mathbf{a}=(a_{1},\ldots,a_{p})^{\top}$ and $Q$ are given as follows,
\begin{align}
\hat{\mathbf{a}}  &  = \arg\min_{\mathbf{a}} \| \mathbf{y}-X(\mathbf{a}%
+\mathbf{e}_{p})\|_{2},\label{lsq}\\
\hat{Q}  &  = \frac{\|\mathbf{y}-X(\mathbf{a}+\mathbf{e}_{p})\|_{2}^{2}}{M-p},
\label{res}%
\end{align}
where $\mathbf{e}_{p}$ denotes a $p$-dimensional vector that is one on the
last component and zero otherwise. In \eqref{lsq} and \eqref{res}, $X$ is an
$(M-p)\times p$ matrix for which the $j$th row is $\mathbf{u}_{j}^{\top}-
\bar{u}\mathbf{1}^{T}$ and $\mathbf{y}=(u_{p+1}-\bar{u},\ldots,u_{M}-\bar
{u})^{\top}$, where $\bar{u}$ is the temporal average of $u_{m}$ that is used
as an estimator for the forcing term $f$ in \eqref{AR} and $\mathbf{1}$ is a
$p$-dimensional vector with unit components. The parameter $p$ can be
obtained from the Akaike criterion \textbf{(AIC)} \cite{akaike:69}, which chooses
$p$ that minimizes $\mathcal{F}(p)=\hat{Q}(M+p)/(M-p)$. This method requires
one to compute the Yule-Walker estimators in \eqref{lsq}, \eqref{res} for
various $p$, which can be time consuming. In our experience, the resulting AR
model is statistically accurate only when the given data set is large, $M\gg1$
and its statistical accuracy is sensitive to the sampling time, $\delta t$, as we will see in
Section~4 below. For different parameterization methods, we refer to the discussion in \cite{neumaierschneider:01}.


\subsection{Linear multistep methods}

Consider approximating solutions of the following linear initial value problem
(IVP),
\begin{align}
\frac{du}{dt}  = \lambda u + F,\quad u(0)  = u_{0},\label{test}
\end{align}
where $\lambda<0$ and $F$ denotes a constant forcing, with a linear multistep method of order-$p$ and discrete integration time step $\delta t$. 
The resulting discrete recursive equation can be described by the deterministic term of the AR model in \eqref{AR} for an appropriate choice
of $a_{j}$. First, let us state the consistency conditions for the AR model in \eqref{AR} that are equivalent to the consistency conditions of the explicit Adams-Bashforth scheme \cite{quarteroni:07,bh:13b}:

\begin{proposition}
The $p$-th AR model \eqref{AR} for approximating the linear test
model in \eqref{test} is consistent of order-$q$, for $1\leq q\leq
p$, if parameters $\{a_{j}\}_{j=1,\ldots,p}$ in \eqref{AR} satisfy:
\begin{align}
\ell\sum_{j=1}^{p} (j-p)^{\ell-1} a_{j} = \lambda\delta t, \quad \ell=
1,\ldots,q. \label{consistency}%
\end{align}
\end{proposition}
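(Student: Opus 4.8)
The plan is to estimate the local truncation error of the deterministic part of \eqref{AR} applied to the test problem \eqref{test}, and to recognize \eqref{consistency} as exactly the conditions that annihilate its leading terms. First I would read off from the last row of $F_{p}$ in \eqref{F} the scalar recursion
\[
u_{m+1} \;=\; u_{m} + \sum_{j=1}^{p} a_{j}\, u_{m+j-p} + f .
\]
Requiring this recursion to admit the equilibrium solution $u\equiv -F/\lambda$ of \eqref{test}, and using the $\ell=1$ case of \eqref{consistency} (which reads $\sum_{j} a_{j}=\lambda\,\delta t$), forces $f=\delta t\,F$. Because \eqref{consistency} already pins $\sum_{j} a_{j}$ at size $\delta t$, I would make the scaling explicit by writing $a_{j}=\delta t\,c_{j}$, so that \eqref{consistency} becomes the $\delta t$-free linear system $\ell\sum_{j=1}^{p}(j-p)^{\ell-1}c_{j}=\lambda$ for $\ell=1,\dots,q$.

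Next I would substitute the exact solution of \eqref{test} into the recursion, divide by $\delta t$, and Taylor-expand about $t_{m}$, using $u(t_{m+j-p})=\sum_{r\ge 0}\frac{((j-p)\delta t)^{r}}{r!}\,u^{(r)}(t_{m})$ together with $u'(t_{m})=\lambda u(t_{m})+F$ and $u^{(r)}(t_{m})=\lambda\,u^{(r-1)}(t_{m})$ for $r\ge 2$. Collecting the result by powers of $\delta t$, the coefficient of $\delta t^{\ell-1}$ should come out to be a fixed ($\delta t$- and solution-independent) multiple of
\[
\Big(\,\ell\sum_{j=1}^{p}(j-p)^{\ell-1}c_{j}-\lambda\,\Big)\,u^{(\ell-1)}(t_{m}) ,
\]
so that imposing \eqref{consistency} for $\ell=1,\dots,q$ kills the error contributions of orders $\delta t^{0},\dots,\delta t^{q-1}$ and leaves a remainder of order $\delta t^{q}$, which is precisely the assertion.

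As an independent check I would identify the recursion with the explicit $p$-step Adams--Bashforth discretization of \eqref{test}: with $a_{j}=\lambda\,\delta t\,\beta_{p-j}$ and $f=\delta t\,F$ one recovers the deterministic part of \eqref{AR}, and the classical Adams--Bashforth order conditions $\sum_{i=0}^{p-1}(-i)^{\ell-1}\beta_{i}=1/\ell$ become, under the index change $i=p-j$ and the identity $(-i)^{\ell-1}=(j-p)^{\ell-1}$, term by term the conditions \eqref{consistency} (for $q<p$ these are just the first $q$ of the full set, leaving the remaining $p-q$ parameters free).

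The obstacle I expect is bookkeeping rather than anything conceptual: one has to keep straight the two competing index conventions --- the AR ordering in which $j=1,\dots,p$ labels $u_{m-p+1},\dots,u_{m}$ in the last row of \eqref{F}, against the Adams--Bashforth backward ordering $i=0,\dots,p-1$ --- and be careful with the signs in $(j-p)^{\ell-1}$. The one genuinely delicate analytic point is that the $a_{j}$ are themselves $O(\delta t)$ (they stand for $\lambda\,\delta t$ times the Adams--Bashforth weights), so the Taylor expansion must be organized by total power of $\delta t$ rather than by the naive power of $\lambda\,\delta t$; the substitution $a_{j}=\delta t\,c_{j}$ is what makes this transparent, and it is what ensures that the hypothesis \eqref{consistency} yields order $q$ rather than merely order $q-1$.
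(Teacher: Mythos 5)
Your proof is correct, but your primary route is genuinely different from the paper's. The paper argues by citation and re-indexing: it quotes the general consistency conditions for explicit $p$-step methods (Theorem 11.3 of \cite{quarteroni:07}), specializes to Adams--Bashforth ($\hat a_0=1$, $\hat a_j=0$ otherwise, so the conditions collapse to $\ell\sum_{j=0}^{p-1}(-j)^{\ell-1}b_j=1$), takes $F=0$, matches $a_{p-j}=b_j\lambda\delta t$, and changes index $j\mapsto p-j$ to land on \eqref{consistency} --- essentially the argument you relegate to your ``independent check,'' and your index bookkeeping there agrees with the paper's. Your main argument instead proves consistency from scratch: you extract the scalar recursion $u_{m+1}=u_m+\sum_{j=1}^p a_j u_{m+j-p}+f$ from the last row of \eqref{F}, insert the exact solution of \eqref{test}, Taylor-expand, and use $u^{(r)}=\lambda u^{(r-1)}$ to show the coefficient of $\delta t^{\ell-1}$ in the scaled truncation error is $-\tfrac{1}{\ell!}\bigl(\ell\sum_{j=1}^p (j-p)^{\ell-1}c_j-\lambda\bigr)u^{(\ell-1)}(t_m)$ with $a_j=\delta t\,c_j$; imposing \eqref{consistency} for $\ell=1,\dots,q$ then leaves an $O(\delta t^{q})$ remainder. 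This computation checks out, including the $\ell=1$ case (where the forcing cancels once $f=\delta t\,F$; the paper simply sets $F=0$, so your treatment of the forcing is a harmless strengthening). What your route buys is a self-contained verification that does not lean on the cited theorem and makes transparent why each condition annihilates one power of $\delta t$ for the linear test problem; what the paper's route buys is brevity and the explicit identification of the consistent AR model with the classical Adams--Bashforth scheme, which is the interpretation the rest of the paper uses.
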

\begin{proof}
See Appendix A.
\end{proof}

When coefficients $\{a_{j}\}_{j=1,\ldots,p}$ of an order-$p$ model are chosen
to satisfy the consistency conditions in \eqref{consistency} of exactly
order-$p$, the resulting multistep method is exactly the explicit 
Adams-Bashforth scheme of order-$p$ that approximates the solutions of the linear ODE in \eqref{test}.
Generally, if the linear multistep method is also stable in the sense of Definition~\ref{stable}, then
the approximate solutions with initial conditions, $\{u_{j}\}_{j=m-p+1}^m$ that tend to $u_{0}$ as $\delta
t\rightarrow0$, converge to the exact solutions of the IVP in \eqref{test}. In
particular, their difference at a fixed time is of order-$p$. The converse
statement is also true. Note that this convergent result \cite{quarteroni:07} is similar to
the fundamental (Lax-equivalence) theorem in the analysis of finite difference
methods for numerical approximation of PDEs.

\subsection{Approximate Kalman filtering with AR model}

Consider filtering noisy observations of $u(t)$ at discrete time step,
$t_{k+1}-t_{k} = \Delta t = n\delta t$ for fixed $n$,
\begin{align}
du  &  = (\lambda u + F)\,dt + \sigma\,dW,\label{perfect}\\
v_{k}  &  = u(t_{k}) + \varepsilon_{k}, \quad\varepsilon_{k}\sim
\mathcal{N}(0,R), \label{observations}%
\end{align}
where $dW$ and $\varepsilon_k$ are independent white noises. Let $u^{+}_{k}$ and
$c^{+}_{k}$ be the posterior mean and variance estimates, respectively,
obtained from the Kalman filter equations with the perfect model in
\eqref{perfect}. Mathematically, these statistics are the first two moments of
the conditional distribution of $u$ at time $t_{k}$ given all observations up
to the current time, $\{t_{j}\}_{j=0,1,\ldots,k}$, obtained from solving
Bayes theorem,
\begin{align}
p(u_{k}|v_{0},\ldots,v_{k})\propto p(u_{k}|v_{0},\ldots,v_{k-1}) \times
p(v_{k}|u_{k}). \label{bayes}%
\end{align}
In short, filtering is a sequential method for updating the prior
distribution, $p(u_{k}|v_{0},\ldots,v_{k-1})$, with the observation likelihood
function, $p(v_{mn}|u_{mn})$. Note that the Kalman solutions are optimal in
the sense of minimum variance \cite{kalman:61}.

Suppose that we have no access to the underlying dynamics in
\eqref{perfect} as in many applications. Let's consider an AR model of
order-p in \eqref{AR} as the filter model. In particular, the approximate
Kalman filter problem with the AR model of order-$p$ \cite{kh:12b,bh:13b} for
the observations in \eqref{observations} is given by,
\begin{align}
\tilde{\mathbf{u}}_{m+1}  &  = F_{p}\tilde{\mathbf{u}}_{m} + \mathbf{f} +
\mathbf{e}_{m+1}, \quad\mathbf{e}_{m+1}\sim\mathcal{N}(0,Q),\label{AR2}\\
v_{k}  &  = u(t_{k}) + \varepsilon_{k} \approx G \tilde{\mathbf{u}}_{k} +
\tilde{\varepsilon}_{k}, \quad\tilde{\varepsilon}_{k}\sim\mathcal{N}%
(0,R).\nonumber
\end{align}
Here, the observation operator $G=[0,\ldots,0,1]$ maps the temporally
concatenated vector $\tilde{\mathbf{u}}_{k} =(\tilde{u}_{k-p+1}\ldots
,\tilde{u}_{k-1},\tilde{u}_{k})^{\top}$ to the observations $v_{k}$. With this
approximate observation model, we commit model error through the AR model in
\eqref{AR2}. For simplicity, we assume that the observations are collected at
every $n$ times of the integration time step, that is, $\Delta t = n\delta t$,
where $kn=m$. For this approximate AR filter, the Kalman filter solutions are
given by the following recursive equations,
\begin{align}
\tilde{u}^{+}_{k}  &  = G\tilde{\mathbf{u}}^{+}_{k} = G[\tilde{\mathbf{u}}%
^{-}_{k} + K_{k} (v_{k}- G\tilde{\mathbf{u}}^{-}_{k})],\label{postmean}\\
\tilde{C}^{+}_{k}  &  = (\mathcal{I}-K_{k}G) \tilde{C}^{-}_{k},\label{postcov}%
\\
K_{k}  &  = \tilde{C}^{-}_{k} G^{\top}(G\tilde{C}^{-}_{k}G^{\top}%
+R)^{-1},\label{kg}\\
\tilde{\mathbf{u}}^{-}_{k}  &  = F_{p}^{n} \tilde{\mathbf{u}}^{+}_{k-1} +
\sum_{j=0}^{n-1} F_{p}^{j} \mathbf{f}, \label{priormean}\\
\tilde{C}^{-}_{k}  &  = F_{p}^{n} \tilde{C}^{+}_{k-1} (F_{p}^{\top})^{n} +
\sum_{j=0}^{n-1} F_{p}^{j} Q (F_{p}^{\top})^{j}, \label{priorcov}%
\end{align}
where $\tilde{\mathbf{u}}^{+}_{k}$ and $\tilde{C}^{+}_{k}$ denote the
posterior mean and covariance statistics, respectively, and $\tilde
{\mathbf{u}}^{-}_{k}$ and $\tilde{C}^{-}_{k}$ denote the prior mean and
covariance statistics, respectively. Notice that in the posterior mean update
in \eqref{postmean}, we multiply the posterior solutions with matrix $G$ to
ensure that we only use the incoming observations to update the mean estimates
at the corresponding time, and not the estimates at previous times. We should
also point out that the Kalman gain formula in \eqref{kg} involves only a
scalar inversion at each iteration.

\subsection{Stable and consistent AR filter in assimilating nonlinear time
series}

The main result in \cite{bh:13b} can be summarized as follows: \textit{For any
$p\geq2$, the approximate mean estimate $\tilde{u}^{+}_{k}$ from the AR filter
of order-$p$ in \eqref{AR2} is as accurate as the optimal mean estimate
$u^{+}_{k}$ of the linear filtering problem in
\eqref{perfect}-\eqref{observations} for large $n$ when the system noise
variance $Q=\sigma^{2}\delta t$ is chosen based on Euler's discretization and
parameters $\{a_{j}\}_{j=1,\ldots,p}$ are chosen to satisfy both the stability
condition in the sense of Definition~1 and the consistency condition of only
order-two of the Adams-Bashforth method as defined in Definition~2.}

Based on this theoretical result, it becomes interesting to see whether
similar result can be achieved in a nonlinear setting with an AR model that
satisfies both the stability and consistency conditions of order-two of Adam-Bashforth 
method conditions. For convenience, we define:

\begin{definition}
\label{consistent} An AR model of order-$p$ in \eqref{AR} is 
consistent if its coefficients, $a_{j}$, satisfy the consistency of order-2
in \eqref{consistency} for a given $\lambda$, $\delta t$, and $p$.
\end{definition}

For general nonlinear problems, coefficient $\lambda$ in \eqref{consistency} corresponds to the decaying time scale (linear dispersion relation) of the
time series while $\sigma$ corresponds to the noise amplitude strength. These parameters can be inferred from the energy and correlation time statistics of the underlying signals that are typically measured in many applications. Loosely speaking, the two consistency constraints in \eqref{consistency} provide some ``physical constraints" on the
purely statistical, AR model. When we have no access to the underlying dynamics of the process as in certain applications, we hope that this
``physics" constrained, AR model can provide reasonably accurate surrogate prior statistics for short term prediction as well as for data assimilation application as in the linear setting \cite{bh:13b}.

However, given $\lambda$ from nonlinear time series, it is nontrivial to determine whether a stable and consistent AR model in the sense of Definitions~\ref{stable} and \ref{consistent} even exists for a fixed-$p$ and discrete integration time step $\delta t$\footnote{We will refer to $\delta t$ as the integration time in the remaining of this paper. In Section~4, we will compare our method to a classical regression-based AR model which parameters are obtained from fitting to a training data set at this time interval, so $\delta t$ can be also called the sampling time.}. For a fixed $p$ and $\lambda$, a naive numerical approach is to find parameters $a_{j}$ by solving the linear regression problem in \eqref{lsq} subjected to two linear constraints in \eqref{consistency}. Then, use the residual formula in \eqref{res} to parameterize $Q$. In \cite{bh:13b}, they applied this approach on time series from the truncated Burgers-Hopf model \cite{mt:00} and found that the resulting AR model produces more accurate filtered solutions compared to those obtained with the corresponding AR model without the two consistency constraints. Unfortunately, their result is not robust. In another numerical example with time series from the Lorenz-96 model \cite{lorenz:96}, they reported that they can't even find parameters $a_{j}$ that yield a stable AR model when the consistency constraints are imposed. In the next section, we will devise a new algorithm based on algebraic geometry tools to avoid such issues. We will require our parameterization method to provide a range of integration time step intervals that guarantees the existence of stable and consistent AR models. Furthermore, we will require the new scheme to only take $\lambda, p, \sigma$ as inputs, as opposed to the linear regression based method in \eqref{lsq} which requires large data sets. 

\section{An algebraic-based parameterization method}
\label{sec:algebraic}
In this section, we present an algebraic method for finding stable and consistent AR models of order-3. We will divide the discussion into three subsections. In the first subsection, we will construct the boundary of the set of parameters that are stable and consistent in the sense of Definitions~\ref{stable} and \ref{consistent}, respectively. In the second subsection, we will describe a method to determine a single set of parameters for AR models of order-3 that lies in this set. In the third subsection, we will summarize the ideas into a self-contained algorithm and illustrate it on an example corresponding to the most energetic, Fourier mode-8 of the Lorenz-96 model in a weakly turbulent regime with forcing constant $F=6$ \cite{mag:05}. 

Most of the underlying ideas discussed in this section can be, in principle, applied to arbitrary order-$p$, where $p>3$. However, we restrict our discussion to the order-3 problem, for the following three reasons:
(i) It seems that the order-3 is already quite useful as we shall see in Section~\ref{sec4} below; 
(ii) The algorithm for the order-3 can be described succinctly. 
      The higher order cases are expected to be very complicated,
      involving various complex algebraic operations; 
(iii) The computing time for the order-3 is within practical range. 
      The higher order cases would require dramatically longer computing times,
      making them impractical at the current state of the art.

Let $\lambda$ be a given complex number corresponding to the linear operator of the model in \eqref{perfect}. We would like to find $a_{1}%
,a_{2},a_{3}$ and~$\delta t$ such that the AR model of order-3 is stable and consistent in the sense of Definitions~\ref{stable} and \ref{consistent}, respectively. First, let us describe this problem mathematically. 

Let us ensure the consistency of order-2 in the sense of Definition~\ref{consistent}, that is,
\begin{align*}
\lambda\delta t &=1\cdot\sum_{j=1}^{3}(j-3)^{1-1}a_{j}  = a_{1}+a_{2}+a_{3}, \\
\lambda\delta t &= 2\cdot\sum_{j=1}^{3}(j-3)^{2-1}a_{j} =-4a_{1}-2a_{2}.
\end{align*}
By solving the two equations for $a_{1},a_{2}$ and $a_{3},$ we get the
following infinitely many solutions,
\begin{align}
a_{1}  &  =\left(  s-\frac{3}{2}\right)  \lambda\delta t,\nonumber\\
a_{2}  &  =-\left(  2s-\frac{5}{2}\right)  \lambda\delta t, \label{param}\\
a_{3}  &  =s\lambda\delta t,\nonumber
\end{align}
where $s$ is an arbitrary complex number. Thus the AR model of order-3 with
the above values of $a_{1},a_{2},a_{3}$ is consistent for arbitrary values of
$s$ and $\delta t.$

Let %
\begin{align*}
\Pi\left(  s,\delta t,x\right)   &  =a_{1}+a_{2}x+a_{3}x^{2}+x^{2}-x^{3}\\
&  =\left(  s-\frac{3}{2}\right)  \lambda\delta t-\left(  2s-\frac{5}%
{2}\right)  \lambda\delta tx+s\lambda\delta tx^{2}+x^{2}-x^{3}.%
\end{align*}
From Definition~\ref{stable}, we need to ensure that the roots of $\Pi$ are within the unit circle in the complex plane,
\begin{align}
\forall x\in\mathbb{C},\;\;\;\Pi(s,\delta t,x)=0\;\Longrightarrow\;|x|<1.\label{cond}
\end{align}

\subsection{Constructing the boundary of the set of stable and consistent parameters} 
In the following, we will go through a series of steps to find the boundary of the stable and consistent subset of parameters $s, \delta t$.
In particular, let $s=\alpha+$ $\beta i.$ Then the condition in \eqref{cond} defines a subset of
the three dimensional real space for $\alpha,\beta,\delta t.$ We will refer to this subset as the {\em stable and consistent set}. 

By the continuity of the roots of $\Pi$, the following condition holds at the boundary%
\[
\exists x\in\mathbb{C},\mbox{ such that }\Pi(\alpha+\beta i,\delta t,x)=0\ \text{and }|x|=1.
\]
This condition is equivalent~\footnote{In general this is not a completely equivalent condition because the rational parameterization misses the point $x=1$. However,  $x=1$ is a root of $\Pi$ only when $\delta t=0$, so this case is not relevant to our problem.} to%
\[
\exists q\in\mathbb{R},\mbox{ such that }u:=\Pi(\alpha+\beta i,\delta t,\frac{1-q^{2}}{1+q^{2}}+\frac{2q}{1+q^{2}}i)=0.
\]
Here, we used the standard rational parameterization of a unit circle
(based on half-tangent). Since the denominator of $u$ is a certain power of $1+q^{2},$ it is
never zero. Thus the above condition is equivalent to
\BEA
\exists q\in\mathbb{R,}\mbox{ such that }g:=\text{numerator of}\;u=0.\label{cond1}
\EEA
Note that the equation $g=0$ is over the complex variable, and hence it is
actually two equations over the real variables. Hence we can rewrite the condition in \eqref{cond1} as%
\[
\exists q\in\mathbb{R},\mbox{ such that }\operatorname{Re}g=0\ \text{and }\operatorname{Im}%
g=0,
\]
where 
$\operatorname{Re}g$ and $\operatorname{Im}g$ are polynomials with real coefficients.

We can eliminate the existentially quantified variable $q$, by finding a generator $r$ 
for the elimination ideal of 
$
\left\langle \text{Re }g,\text{Im } g\right\rangle
$
over $\alpha,\beta,\delta t$.
Informally speaking, the  generator $r$ is a polynomial in the variables 
$\alpha,$ $\beta$ and $\delta t$ so that the solution set of $r=0$ contains the projections of the solution set 
of  the system of equations 
$\operatorname{Re}g=0\ \text{and }\operatorname{Im}%
g=0$ onto the $(\alpha,\beta,\delta t)$ plane.
In a sense, we have eliminated the variable $q$.
For a concise and precise definition of elimination ideal, see  Appendix~B.
For the details, see the highly readable undergraduate textbook on computational algebraic geometry~\cite{clo:10}.
 
Note that $r=0$ defines a surface in the three dimensional real space
for $\alpha,\beta,\delta t$ (see Figure~\ref{fig_r3dr2d}). 
This surface contains the boundary of the stable and consistent set.
For a given value of $\delta t,$ by sampling and checking, one can
determine that the stable and consistent set is the one enclosed by the contour curve for
$\delta t$ (see Figure~\ref{fig_stable}).

From the three-dimensional surface, there are obviously non-unique choices of parameters $\alpha, \beta, \delta t$ that will lie in the stable and consistent set. In the following, we will describe a method for choosing one of these parameters.

\subsection{Choosing a set of stable and consistent parameters}

The contour plot in Figure~\ref{fig_r3dr2d} seems to indicate that the stable and consistent set of a smaller $\delta t$
contains the stable and consistent set of a larger $\delta t.$ Careful computation shows that
it is almost always true. It is violated only when $\delta t$ is very close to the
extreme (where there is no stable and consistent set). Hence it motivates us to choose
$(\alpha,\beta)  $ as the one which is contained in all the
stable and consistent sets for almost all $\delta t$ values. Such a point 
$(\alpha,\beta)$ is indicated by a black cross on the contour plot in Figure~\ref{fig_r3dr2d}. Let us
denote it by ($\hat\alpha ,\hat\beta$) and the corresponding
$\delta t$ as $\hat{\delta t}.$

Note that the point 
$(\hat{\alpha},\hat{\beta},\hat{\delta t})$ lies on the self-intersection curve of the surface $r=0$ (see the plot on the left in Figure ~\ref{fig_r3dr2d}). Thus it is a singular point of the
surface, in other words, it satisfies the following system of equations,%
\BEA
r=0, \quad\frac{\partial r}{\partial\alpha}=0, \quad \frac{\partial r}{\partial\beta}%
=0, \quad\frac{\partial r}{\partial\delta t}=0.
\label{systemr}
\EEA

Here we face a difficulty, the system of equations in \eqref{systemr}
has infinitely many solutions, that is, there are infinitely many singular
points, consisting of the point of self-intersection of each contour (see Figure~\ref{fig_r3dr2d}). These are
not of interest. Hence we only need to find the isolated solutions of the
above system of equations. This can be done, for instance, by carrying out the following
algebraic operations,%
\begin{align*}
J  &  =\text{generators of the elimination ideal of}
\left\langle r,
\frac{\partial r}{\partial\alpha},
\frac{\partial r}{\partial\beta},
\frac{\partial r}{\partial \delta t}
\right\rangle
\text{over }\alpha,\beta,\\
D  &  =\text{prime  decomposition of }\left\langle J \right\rangle\\
W  &  =\bigcup\limits_{\substack{d\in D\\\text{dim}\left(  d\right)
=0}}\ \left\{  \ \text{real solutions of }d\ \right\},
\end{align*}
where $W$ contains all the isolated singular points on the $(\alpha$,$\beta$) plane.

Informally speaking, the elimination ideal $J$ is a set of polynomials in the variables $\alpha$ and $\beta$ so that the solution set of $J=0$ contains the projections of the solution set of  the system of equations 
$r=0, 
\frac{\partial r}{\partial\alpha}=0, 
\frac{\partial r}{\partial\beta}=0,
\frac{\partial r}{\partial\delta t}=0$ 
onto the $(\alpha,\beta)$ plane.
In a sense, we have eliminated the variable $\delta t$.  The prime decomposition $D$ is a set of polynomials, that is  $D=\{d_1,d_2,\ldots\},  $ so that the union of the solution sets of $d_1=0,d_2=0,\ldots$ is the same as the solution set of $J=0$ and that each $d_i$ is ``prime'' in a sense similar to prime number. Hence, it can be viewed as a generalization of prime factorization of integers to the system of polynomial equations. The notation $\text{dim}(d)$ stands for the dimension of the solution set of $d=0$.
Hence $\text{dim}(d)=0$ states that the solution set of $d=0$ is a finite set (consisting of finitely many isolated points). 
For concise and precise definitions of elimination ideal and prime decomposition, see Appendix~B.
For the details, see the highly readable undergraduate textbook on computational algebraic geometry~\cite{clo:10}.

Now we need to choose a suitable point $(\hat{\alpha},\hat{\beta
})  $ from $W.$ For a given $(\alpha,\beta)  \in W,$ the
polynomial equation $r(  \alpha,\beta,\delta t)  =0$ has finitely
many (positive) real solutions for $\delta t.$ Thus, the AR model of order-3
is stable for $\delta t\in (0,\bar{\delta}t)  $ where,
\[
\bar{\delta}t=\min\limits_{\substack{\delta t>0\\r\left(  \alpha,\beta,\delta
t\right)  =0}}\delta t.
\]
Obviously, $\bar{\delta}t$ depends on $(\alpha,\beta).$ For practical consideration, we prefer larger~$\bar{\delta}t.$ Thus we choose $(\alpha,\beta) \in W$ so that $\bar{\delta}t$ is maximum and obtain
 \begin{align*}
 (\hat{\alpha}, \hat{\beta} ) &  =\arg\max\limits_{\substack{\alpha,\beta\\\left(  \alpha
,\beta\right)  \in W}}\min\limits_{\substack{\delta t>0\\r\left(  \alpha
,\beta,\delta t\right)  =0}}\delta t,
\end{align*}
and the maximum value is at $\hat{\delta}t$.

Let $\hat{s}=\hat{\alpha}+\hat{\beta}i$ and substituting this to \eqref{param}, we obtain:%
\begin{align*}
a_{1}  &  =\left(  \hat{s}-\frac{3}{2}\right)  \lambda\delta t,\\
a_{2}  &  =-\left(  2\hat{s}-\frac{5}{2}\right)  \lambda\delta t,\\
a_{3}  &  =\hat{s}\lambda\delta t.
\end{align*}
Then we have
\[
\forall\,\delta t\in (  0,\hat{\delta t}),\quad \forall
x\in\mathbb{C},\quad\Pi(s,\delta t,x)=0\;\Longrightarrow\;|x|<1,
\]
that is, $\forall\,\delta t\in(0,\hat{\delta t})  $ the AR model of order-3 is stable and consistent.

\subsection{Algorithm}\label{sec33}
Table~\ref{algorithm} provides 
 a self-contained algorithm summarizing the ideas discussed in the previous
subsection. We illustrate the algorithm on an example.

\begin{table}[pt]
\caption{Algebraic method for parameterizing stable and consistent AR models of order-3.}
\label{algorithm}
\medskip
\hrule

\begin{description}\itemsep=0pt
\item[In:\hfill] $\lambda\in\mathbb{C}$, such that $\text{Re\ }\lambda<0$
\item[Out:] $a_{1},a_{2},a_{3}\in\mathbb{C}[\delta t]$ and $\hat{\delta t}
\in\mathbb{R}^{+}$, such that for all $\delta t \in(0,\hat{\delta t})$, the AR
model of order-3 is stable and consistent.
\end{description}

\begin{enumerate}\itemsep=0pt
\item \label{step_h} $\Pi=\left(  s-\frac{3}{2}\right)  \lambda\delta
t-\left(  2s-\frac{5}{2}\right)  \lambda\delta tx+s\lambda\delta tx^{2}%
+x^{2}-x^{3}$

\item \label{step_u} $u=$ substitution$(\Pi;s=\alpha+\beta i,x=\frac{1-q^{2}%
}{1+q^{2}}+\frac{2q}{1+q^{2}}i)$

\item \label{step_g} $g=\text{numerator of}\;u$

\item \label{step_r} $r=$ a generator of the elimination ideal of 
                     $\left\langle\text{Re } g,\text{Im } g\right\rangle$ over $\alpha,\beta,\delta t$

\item \label{step_J} $J=$ generators of the elimination ideal of $\left\langle r,\frac{\partial r}{\partial\alpha},\frac{\partial
r}{\partial\beta},\frac{\partial r}{\partial\delta
t}\right\rangle $ over $\alpha,\beta$

\item \label{step_D} $D=\text{prime  decomposition of} \left\langle J \right\rangle$

\item \label{step_W} $W=\bigcup\limits_{\substack{d\in D\\\text{dim}\left(
d\right)  =0}}\ \left\{  \ \text{real solutions of }d\ \right\}  $

\item \label{step_dt} $\hat{\delta t}=\max\limits_{\substack{\alpha
,\beta\\\left(  \alpha,\beta\right)  \in W}}\min\limits_{\substack{\delta
t>0\\ r\left(  \alpha,\beta,\delta t\right)  =0}}\delta t$

\item \label{step_alpha_beta} $( \hat{\alpha},\hat{\beta})
=\arg\max\limits_{\substack{\alpha,\beta\\\left(  \alpha,\beta\right)  \in
W}}\min\limits_{\substack{\delta t>0\\r\left(  \alpha,\beta,\delta t\right)
=0}}\delta t$

\item \label{step_shat} $\hat{s}=\hat{\alpha}+\hat{\beta}i$

\item \label{step_a} $a_{1}=\left(  \hat{s}-\frac{3}{2}\right)  \lambda\delta
t$

\item[] $a_{2}=-\left(  2\hat{s}-\frac{5}{2}\right)  \lambda\delta t$

\item[] $a_{3}=\hat{s}\lambda\delta t$
\end{enumerate}
\hrule
\end{table}

\begin{example} \

\begin{description}
\item[In:] $\lambda= -8.312 - 8.569i$. This choice of $\lambda$ corresponds to the linear dispersion relation of the most energetic Fourier mode-8 of the Lorenz-96 model in weakly chaotic regime \cite{mag:05} (this can be obtained from a simple linear regression Mean Stochastic Model fit as proposed in \cite{mgy:10,mh:12}, cf. \eqref{msm}). 
\end{description}

\begin{description}
\item [\bf Out:] 
\begin{itemize}
\item $a_{1} =\left(  -0.251+3.147i\right)  \delta t$
\item $a_{2} =\left(  4.657-2.010i\right)  \delta t$
\item $a_{3} =\left(  -12.718-9.706i\right)  \delta t$
\item $\hat{\delta t}=0.145$
\end{itemize}
\end{description}

\noindent Figure~\ref{fig_eig} shows the complex roots of $\Pi(x)=a_{1}+a_{2}x+a_{3}x^{2}+x^{2}-x^{3}$ for several values
of $\delta t \in[0,\hat{\delta t}]$. When $\delta t =0$ or $\hat{\delta t}$,
$\Pi$ has one complex root on the unit circle. When $\delta t \in
(0,\hat{\delta t})$, all the complex roots of $\Pi$ lie inside of the unit circle.%

\end{example}

\section{Numerical results}\label{sec4}

In this section, we numerically verify the statistical accuracy of the AR
models obtained from the proposed parameterization scheme for state estimation
and short-time prediction of nonlinear time series. In particular, we will
compare the numerical results based on the following AR models:

\begin{itemize}
\item A standard AR model, parameterized with Yule-Walker estimators in
\eqref{lsq}-\eqref{res}. Here, parameter $p$ is empirically
chosen based on AIC criterion for a fixed $\delta t$ as discussed in Section~2.2. This is a standard
method that was proposed in \cite{kh:12b}. We will refer to this model as the
\textbf{AR-p model}.

\item The second AR model is basically a standard AR model with two
consistency constraints in \eqref{consistency}. The model parameters are
obtained from solving the minimization problem in \eqref{lsq} subjected to two
linear constraints in \eqref{consistency}. Then the Yule-Walker estimator in
\eqref{res} is used to determine $Q$. Here, we simply fix $p$ and $\delta t$
based on the AIC criterion obtained from the unconstrained AR-p model above.
We refer to this model as the consistent AR-p model (in short, \textbf{CAR-p model}).

\item To diagnose the impact of the consistency constraints in
\eqref{consistency}, we also consider a standard AR model with fixed $p=3$
and $\delta t$, parameterized with Yule-Walker estimators. We refer to this
model as the \textbf{AR-3 model}.

\item The proposed AR model with fixed $p=3$ and $\delta t$, parameterized
based on the algorithm discussed in Section~3. Here, the system noise variance
is determined by Euler discretization, $Q=\sigma^{2}\delta t$. We refer to
this model as the stable and consistent AR-3 model (in short, \textbf{SCAR-3 model}). One feature of
this model is that the parameters can be obtained without knowing the
signal time series directly. This model simply takes $\lambda, \sigma$, which
can be inferred from measurements of the energy $\mathcal{E}$ and the decaying
time scale, $\mathcal{T}$, of the signals through linear regression Mean
Stochastic Model proposed in \cite{mgy:10,mh:12},
\begin{align}
\lambda= \mathcal{T}^{-1}, \quad\sigma^{2} = 2\, \mbox{Re}[\lambda] \mathcal{E}.\label{msm}
\end{align}
Practically, one determines the parameters by choosing an integration time, $\delta t\in(0,\hat{\delta}t)$,
based on $\hat{\delta} t$ obtained from the algorithm in Table~\ref{algorithm}.
\end{itemize}

We use the temporal average Root-Mean-Square Error (RMSE)
difference between the estimates, $\hat{u}_{k}$, and the truth, $u_{k}$, defined
as,
\begin{align}
E(\hat{u}) = \Big( \frac{1}{T} \sum_{k=1}^{T} |\hat{u}_{k} - u_{k}|^{2}
\Big)^{1/2},\nonumber
\end{align}
to quantify, both, the short-time forecasting and filtering skills. In
particular, we use the prior estimate error, $E(\hat{u}^{-})$, to determine
the accuracy of the short-time prediction skill. Similarly, we use the
posterior estimate error, $E(\hat{u}^{+})$, to quantify the filtering skill.

\subsection{Application on the Lorenz-96 model: A toy example}
As a testbed, we consider time series from the 40-dimensional Lorenz-96 model \cite{lorenz:96},
\begin{align}
\frac{dx_j}{dt} =(x_{j+1}-x_{j-2})x_{j-1} -x_j+F, \quad j=1,\ldots, 40,\nonumber
\end{align}
in a weakly turbulent regime with forcing constant $F=6$ \cite{kh:12b,mag:05}. 
In particular, we report numerical results on Fourier wave numbers 8 and 1, which have distinct
characteristics. For these two modes, the resulting AR-p models based on AIC criterion have
$p=15$-lags, which are empirically tuned with sampling time $\delta t=4/16$ (see \cite{kh:12b} for details). 
For mode-8, we expect the AR-p model to excel since the underlying signal has longer memory with very slow
oscillatory autocorrelation function (see Figure~\ref{fig3_0}). For mode-1, we
expect the AR-3 model to be very accurate since the autocorrelation
function of this signal decays very quickly compared to mode-8 (again, see
Figure~\ref{fig3_0}).

\subsubsection{Results on Mode-8}

For Fourier mode-8, we obtain $\lambda=-8.312 - 8.569\text{i}$ and our algorithm suggests that the stable and consistent SCAR-3 models can be constructed with integration times $0\leq\delta t< \hat\delta t = 0.145$. 

In Figure~\ref{fig3_1}, we show the distribution of the eigenvalues of the four AR models at nine integration time steps, $\delta t$, all of which produce stable and consistent AR-3 models as shown: 1) AR-p model (magenta triangles); 2) CAR-p model (red diamonds); 3) AR-3 model (blue circles); 4) SCAR-3 model (black plus sign). Notice that at smaller integration times, $\delta t = 1/64, 2/ 64$, the AR-p models are unstable (some eigenvalues are not in the unit circle. On the other hand, for larger
integration times, $\delta t > 5/64$, the CAR-p models are unstable. This confirms the difficulties of finding an AR model that is stable and consistent, as reported in \cite{bh:13b}. Here, the AR-3 models are always stable and the eigenvalues are distributed near $(1,0)$ of the unit circle when the integration times are small. As the integration time increases, the eigenvalues tend to spread out near the boundary of the unit circle. On the other hand, the eigenvalues of the SCAR-3 model do not cluster about one point for any shown integration time. As the integration time increases, one of the eigenvalues tends to be around $(0.5,0)$, whereas the other two eigenvalues are near the boundary of the unit circle in the third quadrant.

In Figure~\ref{fig3_3}, we show the average RMSE of the posterior estimates as functions of integration time, $\delta t$ (on an increment of 1/64), for various observation times $\Delta t=n\delta t$, where $n=1, 10, 50$, and observation noise variances, $R=10\%\mathcal{E}, 25\%\mathcal{E} , 50\%\mathcal{E},100\%\mathcal{E}$. Notice that the estimates from the AR-p model (magenta triangle) tends to blow up in various regimes, especially for smaller integration times. The AR-3 model (blue circle) performs slightly better than the AR-p model, although it also blows up occasionally. This result demonstrates the sensitivity of the statistical estimates of the regression-based AR model (as discussed in Section~2.2). The CAR-p analysis estimates (red diamond) are comparable to the SCAR-3 model in almost every regime except for large observation time, $n=50$, and long integration time steps, $\delta t>6/64$; this divergence is not surprising since the model is very unstable in this regime. In terms of prior estimates, this divergence looks more pronounced even for $n=10$ (see Figure~\ref{fig3_4}). The posterior and prior estimates obtained from the proposed, SCAR-3 model, are very accurate; their average RMSE are consistently small, below the observation error. This result suggests that the two consistency constraints in \eqref{consistency} and the stability condition in Definition~\ref{stable} provide robustly accurate short-term prediction for the AR-3 model.

\subsubsection{Results on Mode-1}

For mode-1, we obtain $\lambda=-1.246-1.214\text{i}$ and our algorithm suggests that SCAR-3 models can be constructed with integration times chosen on interval, $0\leq\delta t<\hat\delta t = 1.006$. In Figure~\ref{fig3_6}, we show the average RMSE of the posterior and prior estimates as functions of integration time, $\delta t$ (on an increment of 4/64), for various observation noise variances, $R$, and observation times with $n=10$. In this regime, we learn that the filtered posterior estimates from all the four AR models have comparable accuracy (see the first column of Figure~\ref{fig3_6}). For a very short integration time, both the unconstrained linear regression based filtered estimates diverge to infinity in finite time.

In terms of predictive skill, the prior estimate errors of the AR-p model can blow up for smaller integration times. The prior estimate errors of the CAR-p model are large (increase to order 10) for larger integration times; this is because these consistent AR-p models are not stable for large $\delta t$. The prior estimate errors from the AR-3 model are sometimes larger than the observation error (or even blow up) when the integration time is small; for larger integration times, $\delta t>4/64$, however, their prior estimates are the most accurate. This is not surprising since the autocorrelation function of the underlying signal decays quickly and thus the underlying signal can be accurately modeled with an AR model with smaller lag $p$. The proposed SCAR-3 model produces prior estimate errors that are smaller than the observation error, except when the integration times are close to $\delta t=1$ for $R=10\%\mathcal{E}$; however these errors (on the order of $10^{-1}$) are much smaller than the largest errors produced by the other three models. These numerical results suggest that the proposed, stable and consistent, SCAR-3, model produces robust filtering and short-time predictive skill for signals with autocorrelation function that decays quickly.

\subsection{Application on predicting RMM indices: A real-world example}

In this section, we show numerical results in predicting a data set, known as Realtime Multivariate MJO (RMM) index \cite{wh:04} which is used to characterize the variability of a dominant wave pattern observed in the tropical atmosphere, the Madden Julian Oscillation (see e.g. \cite{zhang:05} for a short review). The given data set is a two-dimensional time series, obtained from applying empirical orthogonal functions analysis on combinations of equatorially averaged zonal wind at two different heights and satellite-observed outgoing long wave radiation \cite{wh:04}. The first component of the data set is denoted as RMM1 and the second component as RMM2. 

We will model these indices as a complex variable where the real part denotes RMM1 and the imaginary part denotes RMM2 since the trajectory of these pairs rotates counter-clockwise around the origin especially in strong MJO phase. To produce a reasonable ensemble forecast, we need an ensemble of initial conditions. In this application, we apply the ensemble Kalman filter algorithm developed in \cite{hunt:07}; this method basically applies Kalman filter formula, updating the empirical mean and covariance statistics produced by the ensemble forecast. We will set the ensemble size to an arbitrary choice, $50$ members. 

An additional difficulty here is that the data set looks quite noisy but we don't know the observation noise variance, $R$, nor the corresponding distribution. In fact, we don't even know whether the noises are additive or multiplicative types. In our implementation of the EnKF, we apply a simple adaptive noise estimation scheme \cite{bs:13} to extract $R$ from the innovation statistics, $\epsilon_k \equiv v_k -G \tilde{\bu}^-_k$. In our simulations, we found that $R\approx 0.02\mathcal{I}$, which means that either the noises are not additive type and Gaussian (which are implicitly assumed by the noise estimation algorithm) or the AR modeling may not be the best choice for this problem. 

From daily data set of period between Jan 1, 1980-Aug 31, 2011, we obtain $\lambda = -0.4458+3.7161i$ from setting $\delta t =12/365$ as 1 day (such that a unit denotes a month). The resulting parameters are reported in Table~\ref{tab2}. We check the AIC criterion and it shows that the optimum lag is $p\approx 2-3$ so we include the linear-regression based AR-3 model for comparison purpose. Since the time scale of this data set is short, as in the mode-1 of the Lorenz-96 example (Section~4.1.2), we expect the AR-3 to give the best estimates among the class of AR models (again, based on AIC criterion).
\begin{table}[htdp]
\caption{Parameters obtained from fitting RMM index data set for period of Jan 1, 1980-Aug 31, 2011.}
\begin{center}
\begin{tabular}{|c|c|c|}
\hline
model & AR-3 & SCAR-3 \\
\hline
$a_1$ & 0.0564-0.0679i & -0.0381 + 0.0083i \\
$a_2$ & -0.5877 + 0.1307i & 0.0836 - 0.0777i \\
$a_3$ & 0.4938 - 0.0005i & -0.0601 + 0.1916i \\
$Q$ & 0.0584 & 0.0292 \\
\hline
\end{tabular}
\end{center}
\label{tab2}
\end{table}

In Figure~\ref{pc_rmm}, we show the forecasting skill as a function of lead time (in days) with a bivariate pattern correlation 
between the mean estimate, $\hat \bu$, and observation, $\bu$, 
\begin{align}
PC(t) = \frac{\hat{\bu}(t)\cdot \bu(t)}{\| \hat{\bu}\|_2\| \bu\|_2},
\end{align}  
which was used in \cite{zhangetal:13} to evaluate the forecasting skill of RMM index. Here, the component of vectors $\hat{\bu}, \bu$, is daily data from the period of Sept 1, 2011-Aug 31, 2012, so we are evaluating the forecasting skill beyond the period of the training data set. As a reference, the pattern correlations at lead time of 15 days from many participating working groups using operational dynamical models vary between 0.4-0.8 \cite{zhangetal:13}. In this measure, the forecasting skill of both models are comparable. However, when we look at the actual mean forecast estimate (see Figure~\ref{fcst} for the SCAR-3 10-day lead forecast for two different periods), the forecast from SCAR-3 model looks more accurate, following the peaks of the noisy signals, compared to that of the AR-3 model. 

While it is unclear that the proposed model is appropriate for fitting the RMM index, this numerical result indicates a slight advantage of imposing the consistency constraints via the proposed algebraic method over the standard regression-based technique.

\section{Summary}

In this paper, we discussed a novel parameterization method for low-order linear autoregressive models for filtering nonlinear chaotic signals with memory depth. The new algorithm was constructed based on standard algebraic geometry tools. The resulting algorithm also provides an upper bound for discretization time step that guarantees the existence of stable and consistent AR models which can be an issue in nonlinear setting when regression-based method is used \cite{bh:13b}. An attractive feature of this parameterization method is that it only requires long-time average statistics as inputs whereas the classical linear regression-based method usually requires a long time series of training dataset. 

Our numerical results suggested that the short-time predictive skill of the proposed AR models is significantly more accurate than the regression-based AR models. In terms of filtering skill, the proposed models are also comparable to (or slightly more accurate than) the regression-based AR models. In our numerical test with two chaotic time series of different characteristic of time scales, we found that the new parameterization scheme is robust and stable, across wide ranges of discretization (or sampling) times, observation times, and observation noise variances. We also verify the hypothesis on a real-world problem, predicting the MJO index. 

These numerical results suggested that the two consistency constraints in the sense of Definition~\ref{consistent}, together with the stability condition in the sense of Definition~\ref{stable} can improve the short-time prediction and filtering skill with linear autoregressive prior models. In some sense, the consistency and stability conditions advocated here enforce some ``physical constraints" to the, otherwise, purely statistical AR models. However, unlike the physics constrained model proposed in \cite{mh:13,hmm:14}, the long term (equilibrium) statistical prediction of this model is not accurate at all. This is not a surprise at all since the parametric form of AR models may not be sufficient for accurate equilibrium statistical prediction as suggested by the linear theory for filtering with model error \cite{bh:14}; for the two modes example in Section~\ref{sec4}, one needs an AR model of higher order to perfectly fit the two-time equilibrium statistical quantities autocorrelation function \cite{kh:12b,khm:13}. 

Although the idea discussed in this paper, in principle, can be generalized to arbitrary order, $p>3$, we suspect that this will require more complicated algebraic operations and the resulting algorithm will be computationally impractical at the current state of the art. Furthermore, whether the same consistency constraints are useful for multivariate AR models is a wide open question. As a consequence, the method has a practical limitation if for some reason the integration time step $\delta t$ has to be fixed and it is not in the stable and consistent set of the AR-3 model, $\delta t \notin (0,\hat{\delta t})$. We should note that it is a wide open problem to design a scheme for choosing $p>3$  for fixed $\delta t$ such that the AR model is stable and consistent.


\section*{Acknowledgment}
The first author thanks Nan Chen (NYU) for helpful discussion on RMM index prediction problems and thanks Chidong Zhang (RSMAS, Miami) for sharing the data set of RMM index. The research of JH was partially supported by the Office of Naval Research Grants N00014-11-1-0310, N00014-13-1-0797, MURI N00014-12-1-0912 and the National Science Foundation grant DMS-1317919. JLR was partially supported as an undergraduate research assistant through Harlim's ONR and NSF grants. The research of HH was partially supported by NSF grant CCF-1319632.

\section*{Appendix A: Proof of Proposition~1}
To prove proposition~1, we need to only show that the consistency conditions in~\eqref{consistency} are indeed equivalent to Theorem 11.3 in \cite{quarteroni:07} {\bf for approximating linear ODE in \eqref{test}}. 

First, recall that from Theorem 11.3 in \cite{quarteroni:07}, given a general ODE, $\dot{u}=f(t,u)$, the explicit multistep method of $p$-steps,
\begin{align}
u_{m+1} = \sum_{j=0}^{p-1} \hat{a}_j u_{m-j} +\delta t \sum_{j=0}^{p-1} b_j f_{m-j}, \quad m\geq p\label{arp1}
\end{align}
where $f_m = f(t_m,u_m)$, $u_m = u(t_m)$ and $\delta t=t_{m+1}-t_m$,
is consistent and of order-$q$ when the following algebraic conditions are satisfied,
\begin{align}
\sum_{j=0}^{p-1} \hat{a}_j &= 1, \nonumber\\
\sum_{j=0}^{p-1} (-j)^i \hat{a}_j + \ell \sum_{j=0}^{p-1} (-j)^{\ell-1}b_j &= 1, \quad \ell = 1,\ldots, q. \label{cc1}
\end{align}

For Adams-Bashforth method, $\hat{a}_0 = 1$ and $\hat{a}_j=  0$ when $j\neq 0$, and \eqref{cc1} reduces to,
\begin{align}
\ell \sum_{j=0}^{p-1} (-j)^{\ell-1}b_j &= 1, \quad \ell = 1,\ldots, q. \label{sumbj}
\end{align}
In our hypothesis, we denote the AR model in \eqref{AR} as follows,
\begin{align}
u_{m+1} = u_m + \sum_{j=0}^{p-1} a_{p-j} u_{m-j}.\label{arp2}
\end{align}
Matching the notations in \eqref{arp1} and \eqref{arp2}, we have $a_{p-j}u_{m-j} = b_j f_{m-j}\delta t$. For linear ODE in \eqref{test} with $F=0$ (since we typically fit the fluctuation in AR modeling), $f_{m-j} = \lambda u_{m-j}$, and therefore, we have,
$a_{p-j} = b_j \lambda \delta t$. Substituting this to \eqref{sumbj}, we obtain
\begin{align}
\ell \sum_{j=0}^{p-1} (-j)^{\ell-1}a_{p-j} &= \lambda\delta t, \quad \ell = 1,\ldots, q,\nonumber
\end{align}
which is eqn~\eqref{consistency} when index $j$ is replaced by $p-j$.

\section*{Appendix B: Review of algebraic geometry}
\label{app:algebraic} 
In this appendix, we list precise definitions of several algebraic notions used in Section~\ref{sec:algebraic}.  
For their geometric meaning and algorithms, see the highly readable undergraduate textbook on computational algebraic geometry~\cite{clo:10}.

Let $k$ be a field, such as the set of all rational numbers. Let $R_{n}=k\left[
x_{1},\ldots,x_{n}\right]  ,$ that is, the set of all polynomials with
coefficients from $k$ and variables~$x_{1},\ldots,x_{n}.$  

\begin{definition}
[Ideal]Let $I\subset R_{n}.$ We say that $I$ is an \emph{ideal} of $R_{n}$ if
the followings hold

\begin{enumerate}
\item $0\in I.$

\item If $f,g\in I$ then $f+g\in I.$

\item If $f\in I$ and $h\in R_{n},$ then $hf\in I.$
\end{enumerate}
\end{definition}

\begin{proposition}
[Generator]Let $f_{1},\ldots,f_{s}\in R_{n}.$ Let 
\[ I=\left\{  \sum_{i=1}%
^{s}h_{i}f_{i}:h_{1},\ldots,h_{s}\in R_n\right\}.
\]  
Then $I$ is an ideal of
$R_n$. We call $I$ the ideal \emph{generated by} $f_{1},\ldots,f_{s}$ and  denote it as
$\left\langle f_{1},\ldots,f_{s}\right\rangle $. We call $f_{1},\ldots,f_{s}$
\emph{generators} of the ideal $I.$
\end{proposition}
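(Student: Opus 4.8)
The plan is to verify directly that $I$ satisfies the three conditions in the preceding Definition of an ideal, relying only on the fact that $R_n = k[x_1,\ldots,x_n]$ is a commutative ring: it contains the zero polynomial and is closed under addition and multiplication.

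First I would check that $0 \in I$ by taking $h_1 = \cdots = h_s = 0 \in R_n$, which yields $\sum_{i=1}^s h_i f_i = 0$. Next I would check closure under addition: if $f, g \in I$, then by definition there are $h_1,\ldots,h_s$ and $h_1',\ldots,h_s' \in R_n$ with $f = \sum_{i=1}^s h_i f_i$ and $g = \sum_{i=1}^s h_i' f_i$, so that $f + g = \sum_{i=1}^s (h_i + h_i') f_i$; since each $h_i + h_i' \in R_n$, this exhibits $f+g$ in the required form and hence $f + g \in I$. Finally I would check the absorption property: if $f = \sum_{i=1}^s h_i f_i \in I$ and $h \in R_n$ is arbitrary, then distributivity in $R_n$ gives $hf = \sum_{i=1}^s (h h_i) f_i$, and since each $h h_i \in R_n$ we conclude $hf \in I$.

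These three verifications are precisely the defining conditions, so $I$ is an ideal of $R_n$; the designation of $f_1,\ldots,f_s$ as \emph{generators} and the notation $\langle f_1,\ldots,f_s\rangle$ are then definitions that require no further proof. I do not anticipate any genuine obstacle here — the argument is a routine unwinding of the definition — and the only point worth stating explicitly is that at each step the newly formed coefficients ($h_i + h_i'$ and $h h_i$) stay inside $R_n$ precisely because $R_n$ is closed under addition and multiplication.
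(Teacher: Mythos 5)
Your verification is correct and complete: the three checks (zero via $h_1=\cdots=h_s=0$, closure under addition via coefficients $h_i+h_i'$, and absorption via $hf=\sum_i (hh_i)f_i$ using distributivity) are exactly the defining conditions of an ideal, and the paper itself offers no proof of this proposition, deferring instead to the textbook \cite{clo:10}, where the same routine argument is given. No gaps; your closing remark that the new coefficients stay in $R_n$ because it is closed under addition and multiplication is the only point that needed to be made explicit.
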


\begin{theorem}
[Hilbert Basis Theorem]
Let $I$ be an ideal  of $R_n$. Then \[ I=\left\langle f_{1},\ldots,f_{s}\right\rangle\] 
for some finitely many $f_1,\ldots,f_s \in I$. 
\end{theorem}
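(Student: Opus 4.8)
The plan is to prove this by induction on the number of variables $n$, with the inductive step carried by the classical lemma: \emph{if $R$ is a commutative ring in which every ideal is finitely generated, then every ideal of the one-variable polynomial ring $R[x]$ is finitely generated.} Granting the lemma, the base case $n=0$ is immediate, since $R_0=k$ is a field and its only ideals are $\langle 0\rangle$ and $\langle 1\rangle$; and the inductive step is just the identification $R_n = R_{n-1}[x_n]$ together with the induction hypothesis applied to $R_{n-1}$. So the whole content is in the lemma.

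To prove the lemma I would fix an ideal $I\subseteq R[x]$ and organize the argument around leading coefficients. For each integer $d\geq 0$, let $L_d\subseteq R$ be the set consisting of $0$ together with all leading coefficients of the degree-$d$ elements of $I$. A routine check, using that $I$ is closed under addition (the sum of two degree-$d$ elements of $I$ is again in $I$, even if its degree drops, which is why $0$ is included) and under multiplication by elements of $R$, shows each $L_d$ is an ideal of $R$; and multiplying a degree-$d$ element of $I$ by $x$ shows $L_0\subseteq L_1\subseteq L_2\subseteq\cdots$. The union $L=\bigcup_{d\geq 0}L_d$ is then an ideal of $R$, hence finitely generated by hypothesis; since each of its finitely many generators lies in some $L_d$, there is an $N$ with $L_d=L_N$ for all $d\geq N$. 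For each $d$ with $0\leq d\leq N$, choose generators $a_{d,1},\ldots,a_{d,k_d}$ of $L_d$ and polynomials $f_{d,i}\in I$ of degree $d$ whose leading coefficient is $a_{d,i}$.

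The remaining claim is that the finite set $\{f_{d,i}: 0\leq d\leq N,\ 1\leq i\leq k_d\}$ generates $I$, which I would prove by induction on the degree of an arbitrary nonzero $g\in I$. Writing $e=\deg g$ and $b$ for its leading coefficient, we have $b\in L_{\min(e,N)}$, so $b$ is an $R$-linear combination of the $a_{\min(e,N),i}$; subtracting from $g$ the corresponding $R$-linear combination of $x^{\,e-\min(e,N)}f_{\min(e,N),i}$ yields an element of $I$ of strictly smaller degree, which lies in the ideal generated by the $f_{d,i}$ by the induction hypothesis, and hence so does $g$.

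The main obstacle is conceptual rather than computational: one has to hit on the right bookkeeping device, namely the ascending chain of leading-coefficient ideals $L_d$, observe that it must stabilize because the union $L$ is finitely generated, and then splice together the generators of the finitely many distinct $L_d$ into a single finite generating set for $I$. Once that structure is in place the degree-reduction step is mechanical. I would also be careful to note that the hypothesis being propagated through the outer induction is exactly ``every ideal is finitely generated'' (equivalently, the ascending chain condition), so that invoking finite generation of $L$ inside the proof of the lemma is legitimate.
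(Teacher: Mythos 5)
Your proof is correct: the outer induction on the number of variables (base case a field), the key lemma that $R$ Noetherian implies $R[x]$ Noetherian via the chain of leading-coefficient ideals $L_d$, their stabilization through the union $L$, and the degree-reduction argument are all sound, including the care you take to put $0$ into $L_d$ so that closure under addition survives a drop in degree. Be aware, though, that the paper itself offers no proof of this statement: Appendix B is a glossary of standard algebraic notions, and the Hilbert Basis Theorem is quoted there without proof, with the reader referred to the textbook of Cox, Little and O'Shea. Your argument is therefore an independent route, and in fact a different one from the proof in that reference, which proceeds through monomial orders: there one shows the leading-term ideal $\langle \mathrm{LT}(I)\rangle$ is finitely generated by Dickson's lemma on monomial ideals, and the division algorithm then shows that finitely many elements of $I$ whose leading terms generate $\langle \mathrm{LT}(I)\rangle$ (a Gr\"obner basis) generate $I$. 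Your classical Noetherian-induction proof is more elementary and self-contained, needing no monomial orders, and it yields the stronger statement that $R[x]$ is Noetherian for an arbitrary Noetherian commutative ring $R$; the Gr\"obner-basis route, by contrast, buys exactly the algorithmic machinery (division, elimination ideals, decompositions) that Section 3 of the paper relies on to compute the generator $r$ and the ideal $J$. Two cosmetic repairs: in the degree-reduction step the difference may be zero, in which case it lies trivially in the ideal generated by the $f_{d,i}$, and the induction on $\deg g$ needs the harmless base case $\deg g = 0$, where $g$ is already an $R$-linear combination of the constants $f_{0,i}$.
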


\begin{proposition}
[Elimination Ideal]Let $I$ be an ideal of $R_{n}.$ Let $J=I\cap R_{i}$ for
$1\leq i<n.$ Then $J$ is an ideal of $R_{i}$. We call $J$ the
\emph{elimination ideal} of~$I$ over $x_{1},\ldots,x_{i}.$
\end{proposition}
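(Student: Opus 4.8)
The plan is to verify directly the three defining properties of an ideal from Definition~3, applied to $J=I\cap R_i$ regarded as a subset of the polynomial ring $R_i$. The one structural observation I would record at the outset is the inclusion $R_i\subseteq R_n$: every polynomial in the variables $x_1,\ldots,x_i$ is in particular a polynomial in $x_1,\ldots,x_n$. This inclusion is exactly what lets me feed elements of $R_i$ into the absorption axiom for $I$, which is stated relative to $R_n$. With that in hand, checking the three axioms is routine, so I would organize the proof as three short verifications.

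First I would check $0\in J$: since $I$ is an ideal of $R_n$ we have $0\in I$, and trivially $0\in R_i$, so $0\in I\cap R_i=J$. Next, closure under addition: if $f,g\in J$ then in particular $f,g\in I$, hence $f+g\in I$ because $I$ is an ideal; and $f,g\in R_i$, hence $f+g\in R_i$ because $R_i$ is closed under addition; therefore $f+g\in I\cap R_i=J$. Finally, the absorption property: given $f\in J$ and $h\in R_i$, I use $R_i\subseteq R_n$ to view $h$ as an element of $R_n$, so that $f\in I$ and $h\in R_n$ yield $hf\in I$ by property~3 of Definition~3; on the other hand $f,h\in R_i$ and $R_i$ is closed under multiplication, so $hf\in R_i$; hence $hf\in I\cap R_i=J$. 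This establishes all three axioms and shows $J$ is an ideal of $R_i$.

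There is no genuinely hard step here; the only point deserving a moment's care is that the absorption axiom for $J$ as an ideal of $R_i$ only requires multiplication by elements of $R_i$, and those are handled by the ($R_n$-level) absorption property of $I$ precisely because $R_i$ sits inside $R_n$. I would flag, rather than prove, that the restriction to $R_i$ in the statement is essential: $J$ need not absorb multiplication by all of $R_n$ (for instance $x_n$ times a nonzero element of $J$ generally leaves $R_i$), so one cannot hope for a stronger conclusion.
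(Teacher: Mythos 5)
Your verification is correct: checking the three axioms of Definition~3 for $J=I\cap R_i$, with the inclusion $R_i\subseteq R_n$ supplying the absorption step, is exactly the standard argument. The paper itself gives no proof of this proposition (it defers to the textbook of Cox, Little, and O'Shea), so there is nothing to contrast with; your closing remark that $J$ need not absorb multiplication by elements of $R_n\setminus R_i$ is a sensible observation and correctly identifies why the conclusion is stated only over $R_i$.
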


\begin{definition}
[Prime Ideal]An ideal $I$ of $R_{n}$ is called \emph{prime }if
\[
\forall f,g\in R_{n}\ \ \ \ fg\in I\ \ \Longrightarrow\ \ f\in I~\text{or\ }%
g\in I.
\]
\end{definition}

\begin{definition}
[Prime Decomposition]Let $I$ be an ideal of $R_{n}$. Let $J_{1},\ldots,J_{s}$
be ideals of $R_{n}.$ We say that $J_{1},\ldots,J_{s}$ is a\emph{ prime
decomposition} of $I$ if $I=J_{1}\cap\cdots\cap J_{s}$ and $J_{1},\ldots
,J_{s}$ are prime.
\end{definition}


\begin{thebibliography}{32}
\expandafter\ifx\csname natexlab\endcsname\relax\def\natexlab#1{#1}\fi
\providecommand{\bibinfo}[2]{#2}
\ifx\xfnm\relax \def\xfnm[#1]{\unskip,\space#1}\fi
\bibitem[{Lorenc(1986)}]{lorenc:86}
\bibinfo{author}{A.~Lorenc},
\newblock \bibinfo{title}{Analysis methods for numerical weather prediction},
\newblock \bibinfo{journal}{Quarterly Journal of the Royal Meteorological
  Society} \bibinfo{volume}{112} (\bibinfo{year}{1986})
  \bibinfo{pages}{1177--1194}.
\bibitem[{Evensen(1994)}]{evensen:94}
\bibinfo{author}{G.~Evensen},
\newblock \bibinfo{title}{{Sequential data assimilation with a nonlinear
  quasi-geostrophic model using Monte Carlo methods to forecast error
  statistics}},
\newblock \bibinfo{journal}{Journal of Geophysical Research}
  \bibinfo{volume}{99} (\bibinfo{year}{1994}) \bibinfo{pages}{10143--10162}.
\bibitem[{Anderson(2001)}]{anderson:01}
\bibinfo{author}{J.~Anderson},
\newblock \bibinfo{title}{{An ensemble adjustment Kalman filter for data
  assimilation}},
\newblock \bibinfo{journal}{Monthly Weather Review} \bibinfo{volume}{129}
  (\bibinfo{year}{2001}) \bibinfo{pages}{2884--2903}.
\bibitem[{Hunt et~al.(2007)Hunt, Kostelich, and Szunyogh}]{hunt:07}
\bibinfo{author}{B.~Hunt}, \bibinfo{author}{E.~Kostelich},
  \bibinfo{author}{I.~Szunyogh},
\newblock \bibinfo{title}{{Efficient data assimilation for spatiotemporal
  chaos: a local ensemble transform Kalman filter}},
\newblock \bibinfo{journal}{Physica D} \bibinfo{volume}{230}
  (\bibinfo{year}{2007}) \bibinfo{pages}{112--126}.
\bibitem[{Friedland(1969)}]{friedland:69}
\bibinfo{author}{B.~Friedland},
\newblock \bibinfo{title}{{Treatment of bias in recursive filtering}},
\newblock \bibinfo{journal}{IEEE Trans. Automat. Contr.}
  \bibinfo{volume}{AC-14} (\bibinfo{year}{1969}) \bibinfo{pages}{359--367}.
\bibitem[{Friedland(1982)}]{friedland:82}
\bibinfo{author}{B.~Friedland},
\newblock \bibinfo{title}{{Estimating sudden changes of biases in linear
  dynamical systems}},
\newblock \bibinfo{journal}{IEEE Trans. Automat. Contr.}
  \bibinfo{volume}{AC-27} (\bibinfo{year}{1982}) \bibinfo{pages}{237--240}.
\bibitem[{Dee and da~Silva(1998)}]{dds:98}
\bibinfo{author}{D.~Dee}, \bibinfo{author}{A.~da~Silva},
\newblock \bibinfo{title}{{Data assimilation in the presence of forecast
  bias}},
\newblock \bibinfo{journal}{Quarterly Journal of the Royal Meteorological
  Society} \bibinfo{volume}{124} (\bibinfo{year}{1998})
  \bibinfo{pages}{269--295}.
\bibitem[{Mitchell and Gottwald(2012)}]{mg:12}
\bibinfo{author}{L.~Mitchell}, \bibinfo{author}{G.~Gottwald},
\newblock \bibinfo{title}{Data assimilation in slow--fast systems using
  homogenized climate models},
\newblock \bibinfo{journal}{Journal of the Atmospheric Sciences}
  \bibinfo{volume}{69} (\bibinfo{year}{2012}) \bibinfo{pages}{1359--1377}.
\bibitem[{Branicki et~al.(2013)Branicki, Chen, and Majda}]{bcm:13}
\bibinfo{author}{M.~Branicki}, \bibinfo{author}{N.~Chen},
  \bibinfo{author}{A.~Majda},
\newblock \bibinfo{title}{{Non-Gaussian Test Models for Prediction and State
  Estimation with Model Errors}},
\newblock  \bibinfo{journal}{Chinese Annals Math.} \bibinfo{volume}{34B} (\bibinfo{year}{2013}) \bibinfo{pages}{29--64}.
\bibitem[{Majda and Harlim(2013)}]{mh:13}
\bibinfo{author}{A.~Majda}, \bibinfo{author}{J.~Harlim},
\newblock \bibinfo{title}{{Physics constrained nonlinear regression models for
  time series.}},
\newblock \bibinfo{journal}{Nonlinearity} \bibinfo{volume}{26}
  (\bibinfo{year}{2013}) \bibinfo{pages}{201--217}.
\bibitem[{Gottwald and Harlim(2013)}]{gh:13}
\bibinfo{author}{G.~Gottwald}, \bibinfo{author}{J.~Harlim},
\newblock \bibinfo{title}{The role of additive and multiplicative noise in
  filtering complex dynamical systems},
\newblock \bibinfo{journal}{Proceedings of the Royal Society A: Mathematical,
  Physical and Engineering Science} \bibinfo{volume}{469}
  (\bibinfo{year}{2013}).
\bibitem[{Harlim et~al.(2014)Harlim, Mahdi, and Majda}]{hmm:14}
\bibinfo{author}{J.~Harlim}, \bibinfo{author}{A.~Mahdi},
  \bibinfo{author}{A.~Majda},
\newblock \bibinfo{title}{An ensemble kalman filter for statistical estimation
  of physics constrained nonlinear regression models},
\newblock \bibinfo{journal}{Journal of Computational Physics}
  \bibinfo{volume}{257, Part A} (\bibinfo{year}{2014}) \bibinfo{pages}{782 --
  812}.
\bibitem[{Berry and Harlim(2014)}]{bh:14}
\bibinfo{author}{T.~Berry}, \bibinfo{author}{J.~Harlim},
\newblock \bibinfo{title}{{Linear Theory for Filtering Nonlinear Multiscale
  Systems with Model Error}},
\newblock \bibinfo{journal}{Proc. Roy. Soc. A 20140168}
  (\bibinfo{year}{2014}).
\bibitem[{Harlim and Majda(2008)}]{hm:08b}
\bibinfo{author}{J.~Harlim}, \bibinfo{author}{A.~Majda},
\newblock \bibinfo{title}{Filtering nonlinear dynamical systems with linear
  stochastic models},
\newblock \bibinfo{journal}{Nonlinearity} \bibinfo{volume}{21}
  (\bibinfo{year}{2008}) \bibinfo{pages}{1281--1306}.
\bibitem[{Majda et~al.(2010)Majda, Gershgorin, and Yuan}]{mgy:10}
\bibinfo{author}{A.~Majda}, \bibinfo{author}{B.~Gershgorin},
  \bibinfo{author}{Y.~Yuan},
\newblock \bibinfo{title}{{Low frequency response and fluctuation-dissipation
  theorems: Theory and practice}},
\newblock \bibinfo{journal}{J. Atmos. Sci.} \bibinfo{volume}{67}
  (\bibinfo{year}{2010}) \bibinfo{pages}{1181--1201}.
\bibitem[{Kang and Harlim(2012)}]{kh:12b}
\bibinfo{author}{E.~Kang}, \bibinfo{author}{J.~Harlim},
\newblock \bibinfo{title}{Filtering nonlinear spatio-temporal chaos with
  autoregressive linear stochastic models},
\newblock \bibinfo{journal}{Physica D: Nonlinear Phenomena}
  \bibinfo{volume}{241} (\bibinfo{year}{2012}) \bibinfo{pages}{1099 -- 1113}.
\bibitem[{Bakunova and Harlim(2013)}]{bh:13b}
\bibinfo{author}{E.~Bakunova}, \bibinfo{author}{J.~Harlim},
\newblock \bibinfo{title}{Optimal filtering of complex turbulent systems with
  memory depth through consistency constraints},
\newblock \bibinfo{journal}{J. Comput. Physics} \bibinfo{volume}{237}
  (\bibinfo{year}{2013}) \bibinfo{pages}{320--343}.
\bibitem[{Lorenz(1996)}]{lorenz:96}
\bibinfo{author}{E.~Lorenz},
\newblock \bibinfo{title}{{Predictability - a problem partly solved}},
\newblock in: \bibinfo{booktitle}{Proceedings on predictability, held at ECMWF
  on 4-8 September 1995}, pp. \bibinfo{pages}{1--18}.
\bibitem[{Wheeler and Hendon(2004)}]{wh:04}
\bibinfo{author}{M.~C. Wheeler}, \bibinfo{author}{H.~H. Hendon},
\newblock \bibinfo{title}{An all-season real-time multivariate MJO index:
  Development of an index for monitoring and prediction},
\newblock \bibinfo{journal}{Monthly Weather Review} \bibinfo{volume}{132}
  (\bibinfo{year}{2004}) \bibinfo{pages}{1917--1932}.
\bibitem[{{Zhang}(2005)}]{zhang:05}
\bibinfo{author}{C.~{Zhang}},
\newblock \bibinfo{title}{{{M}adden--{J}ulian Oscillation}},
\newblock \bibinfo{journal}{Reviews of Geophysics} \bibinfo{volume}{43}
  (\bibinfo{year}{2005}) \bibinfo{pages}{G2003+}.
\bibitem[{Brockwell and Davis(2002)}]{brockwell:02}
\bibinfo{author}{P.~Brockwell}, \bibinfo{author}{R.~Davis},
  \bibinfo{title}{{Introduction to time series and forecasting}},
  \bibinfo{publisher}{Springer Verlag}, \bibinfo{year}{2002}.
\bibitem[{Akaike(1969)}]{akaike:69}
\bibinfo{author}{H.~Akaike},
\newblock \bibinfo{title}{{Fitting autoregressive models for regression}},
\newblock \bibinfo{journal}{Annals of the Institute of Statistical Mathematics}
  \bibinfo{volume}{21} (\bibinfo{year}{1969}) \bibinfo{pages}{243--247}.
\bibitem[{Neumaier and Schneider(2001)}]{neumaierschneider:01}
\bibinfo{author}{A.~Neumaier}, \bibinfo{author}{T.~Schneider},
\newblock \bibinfo{title}{Estimation of parameters and eigenmodes of
  multivariate autoregressive models},
\newblock \bibinfo{journal}{ACM Trans. Math. Softw.} \bibinfo{volume}{27}
  (\bibinfo{year}{2001}) \bibinfo{pages}{27--57}.
\bibitem[{Quarteroni et~al.(2007)Quarteroni, Sacco, and Saleri}]{quarteroni:07}
\bibinfo{author}{A.~Quarteroni}, \bibinfo{author}{R.~Sacco},
  \bibinfo{author}{F.~Saleri}, \bibinfo{title}{{Numerical Mathematics}},
  volume~\bibinfo{volume}{37} of \textit{\bibinfo{series}{Texts in Applied
  Mathematics}}, \bibinfo{publisher}{Springer}, \bibinfo{year}{2007}.
\bibitem[{Kalman and Bucy(1961)}]{kalman:61}
\bibinfo{author}{R.~Kalman}, \bibinfo{author}{R.~Bucy},
\newblock \bibinfo{title}{{New results in linear filtering and prediction
  theory}},
\newblock \bibinfo{journal}{Trans. AMSE J. Basic Eng.} \bibinfo{volume}{83D}
  (\bibinfo{year}{1961}) \bibinfo{pages}{95--108}.
\bibitem[{Majda and Timofeyev(2000)}]{mt:00}
\bibinfo{author}{A.~Majda}, \bibinfo{author}{I.~Timofeyev},
\newblock \bibinfo{title}{Remarkable statistical behavior for truncated
  burgers-hopf dynamics},
\newblock \bibinfo{journal}{Proceedings of the National Academy of Sciences}
  \bibinfo{volume}{97} (\bibinfo{year}{2000}) \bibinfo{pages}{12413--12417}.
\bibitem[{Majda et~al.(2005)Majda, Abramov, and Grote}]{mag:05}
\bibinfo{author}{A.~Majda}, \bibinfo{author}{R.~Abramov},
  \bibinfo{author}{M.~Grote}, \bibinfo{title}{{Information theory and
  stochastics for multiscale nonlinear systems}}, \bibinfo{publisher}{CRM
  Monograph Series v.25, American Mathematical Society},
  \bibinfo{address}{Providence, Rhode Island, USA}, \bibinfo{year}{2005}.
\bibitem[{Cox et~al.(2010)Cox, Little, and O'Shea}]{clo:10}
\bibinfo{author}{D.~Cox}, \bibinfo{author}{J.~Little},
  \bibinfo{author}{D.~O'Shea}, \bibinfo{title}{Ideals, Varieties, and
  Algorithms: An Introduction to Computational Algebraic Geometry and
  Commutative Algebra}, Undergraduate Texts in Mathematics,
  \bibinfo{publisher}{Springer}, \bibinfo{year}{2010}.
\bibitem[{Majda and Harlim(2012)}]{mh:12}
\bibinfo{author}{A.~Majda}, \bibinfo{author}{J.~Harlim},
  \bibinfo{title}{{Filtering Complex Turbulent Systems}},
  \bibinfo{publisher}{Cambridge University Press, UK}, \bibinfo{year}{2012}.
\bibitem[{Berry and Sauer(2013)}]{bs:13}
\bibinfo{author}{T.~Berry}, \bibinfo{author}{T.~Sauer},
\newblock \bibinfo{title}{{Adaptive ensemble Kalman filtering of nonlinear
  systems}},
\newblock \bibinfo{journal}{Tellus A} \bibinfo{volume}{65}
  (\bibinfo{year}{2013}) \bibinfo{pages}{20331}.
\bibitem[{Zhang et~al.(2013)Zhang, Gottschalck, Maloney, Moncrieff, Vitart,
  Waliser, Wang, and Wheeler}]{zhangetal:13}
\bibinfo{author}{C.~Zhang}, \bibinfo{author}{J.~Gottschalck},
  \bibinfo{author}{E.~D. Maloney}, \bibinfo{author}{M.~W. Moncrieff},
  \bibinfo{author}{F.~Vitart}, \bibinfo{author}{D.~E. Waliser},
  \bibinfo{author}{B.~Wang}, \bibinfo{author}{M.~C. Wheeler},
\newblock \bibinfo{title}{Cracking the MJO nut},
\newblock \bibinfo{journal}{Geophysical Research Letters} \bibinfo{volume}{40}
  (\bibinfo{year}{2013}) \bibinfo{pages}{1223--1230}.
\bibitem[{Kang et~al.(2013)Kang, Harlim, and Majda}]{khm:13}
\bibinfo{author}{E.~Kang}, \bibinfo{author}{J.~Harlim},
  \bibinfo{author}{A.~Majda},
\newblock \bibinfo{title}{Regression models with memory for the linear response
  of turbulent dynamical systems},
\newblock \bibinfo{journal}{Comm. Math. Sci.} \bibinfo{volume}{11}
  (\bibinfo{year}{2013}) \bibinfo{pages}{481--498}.

\end{thebibliography}


\comment{   \clearpage%
\begin{table}%
[htdp]
\begin{center}
\begin{tabular}%
{|c|c|c|c|}\hline variable & mean & variance & correlation time \\
\hline$u_{1}$ & -0.0004 - 0.0073i & 0.0901 & 13.6284 (6.3701) \\ $u_{2}$ &
-0.0051 - 0.0136i & 0.0994 & 3.6426 (3.4758) \\ $r$ & -0.0025 - 0.0039i &
0.1066 & 0.2793 (0.2734) \\ \hline%
\end{tabular}
\end{center}
\caption{Long time averaged statistics of the one-memory level system in
\eqref{multilevel}.} \label{tab1}
\end{table}%
}

\clearpage \begin{figure}[ptb]
\centering
\includegraphics[width=.51\textwidth]{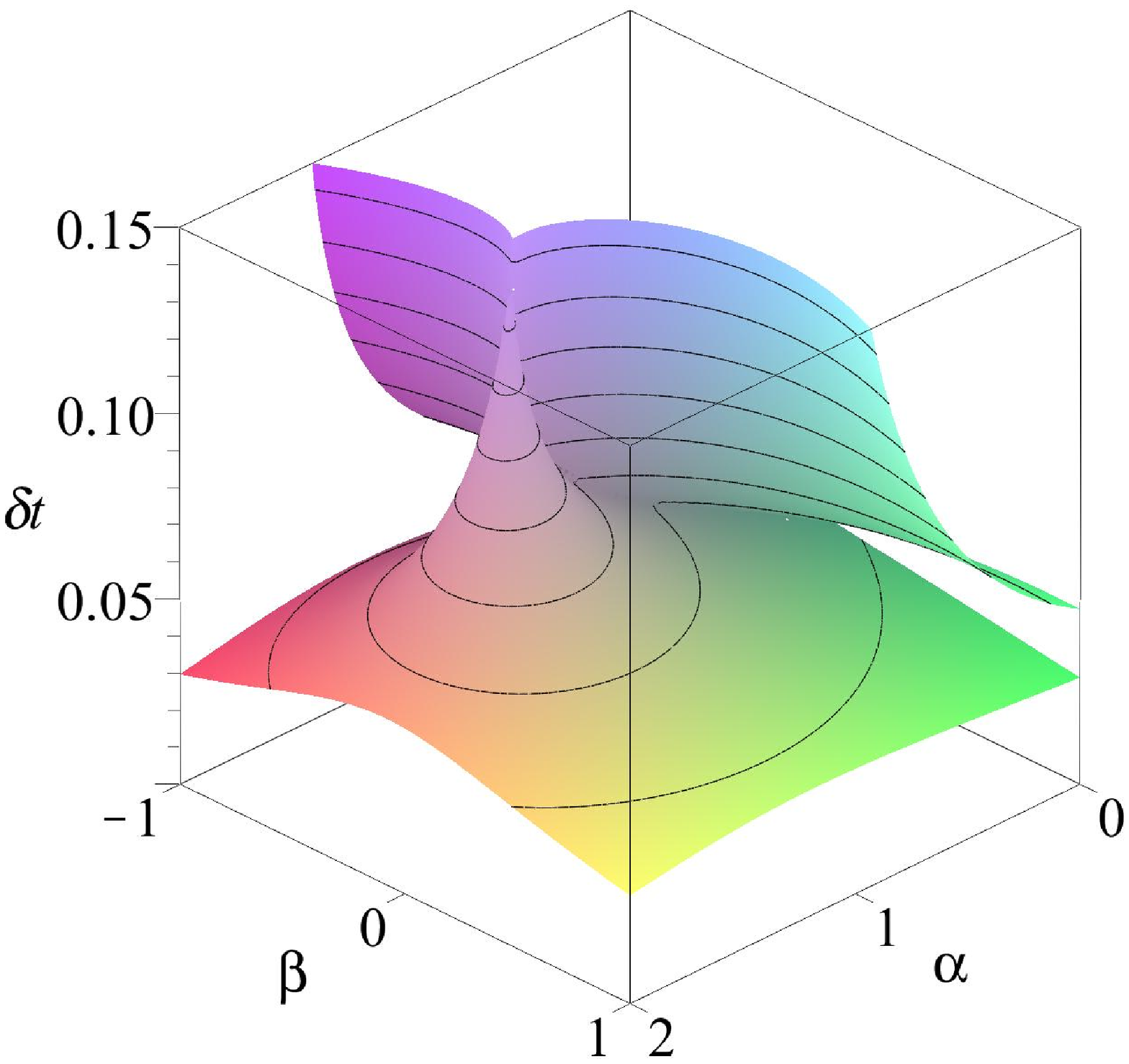}
\includegraphics[width=.44\textwidth]{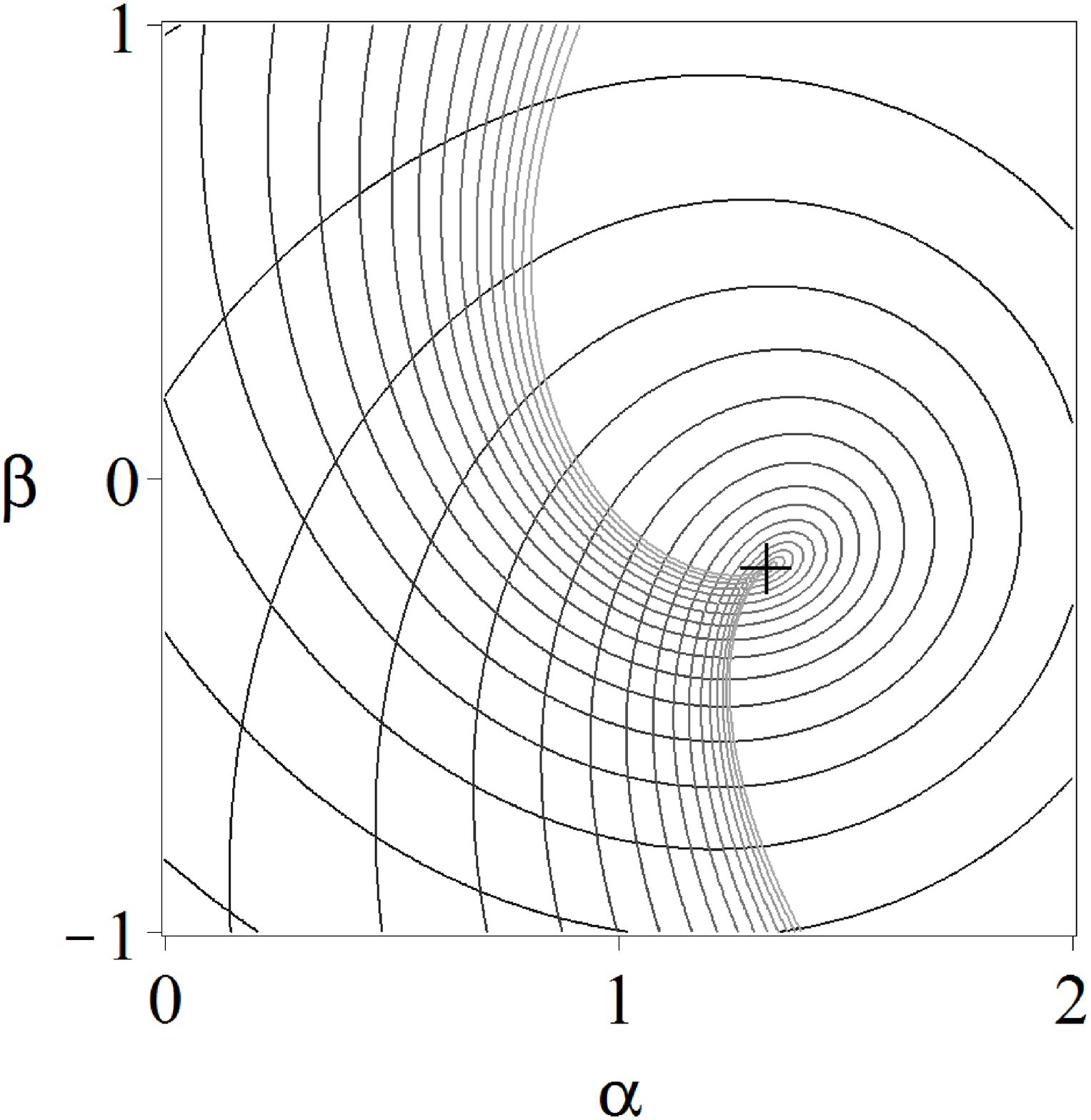}\caption{The left figure shows the surface on the three dimensional real space for $\lambda= -8.312 - 8.569i$. The
right figure shows the contour plot of the same surface on the $(\alpha,\beta)$
plane, where each curve (contour)\ corresponds to a particular value of
$\delta t.$} 
\label{fig_r3dr2d}%
\end{figure}

\clearpage \begin{figure}[ptb]
\centering
\includegraphics[width=.6\textwidth]{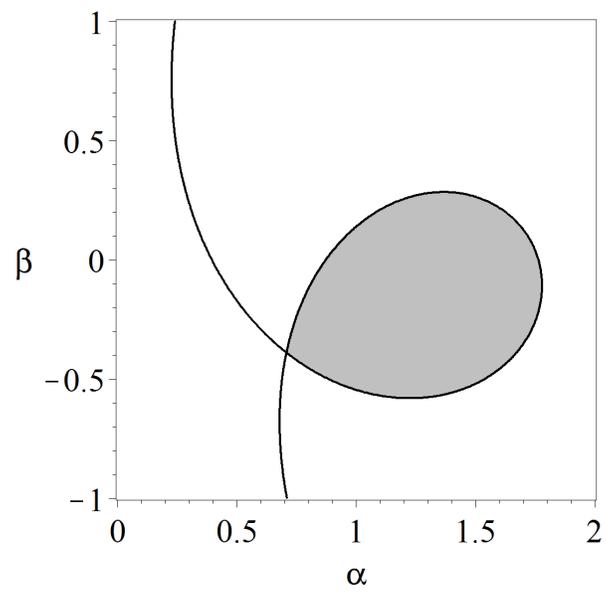}\caption{This figure shows a single contour of the surface on the $(\alpha,\beta)$ plane, for a fixed value of $\delta t$.  The stable and consistent set is shown in gray.}%
\label{fig_stable}%
\end{figure}

\clearpage \begin{figure}[ptb]
\centering
\begin{tabular}
[c]{ccccc}%
$\delta t = 0.0 \hat{\delta t}$ & $\delta t = 0.2 \hat{\delta t}$ & $\delta t = 0.4 \hat{\delta t}$ \\
\includegraphics[width=.3\textwidth]{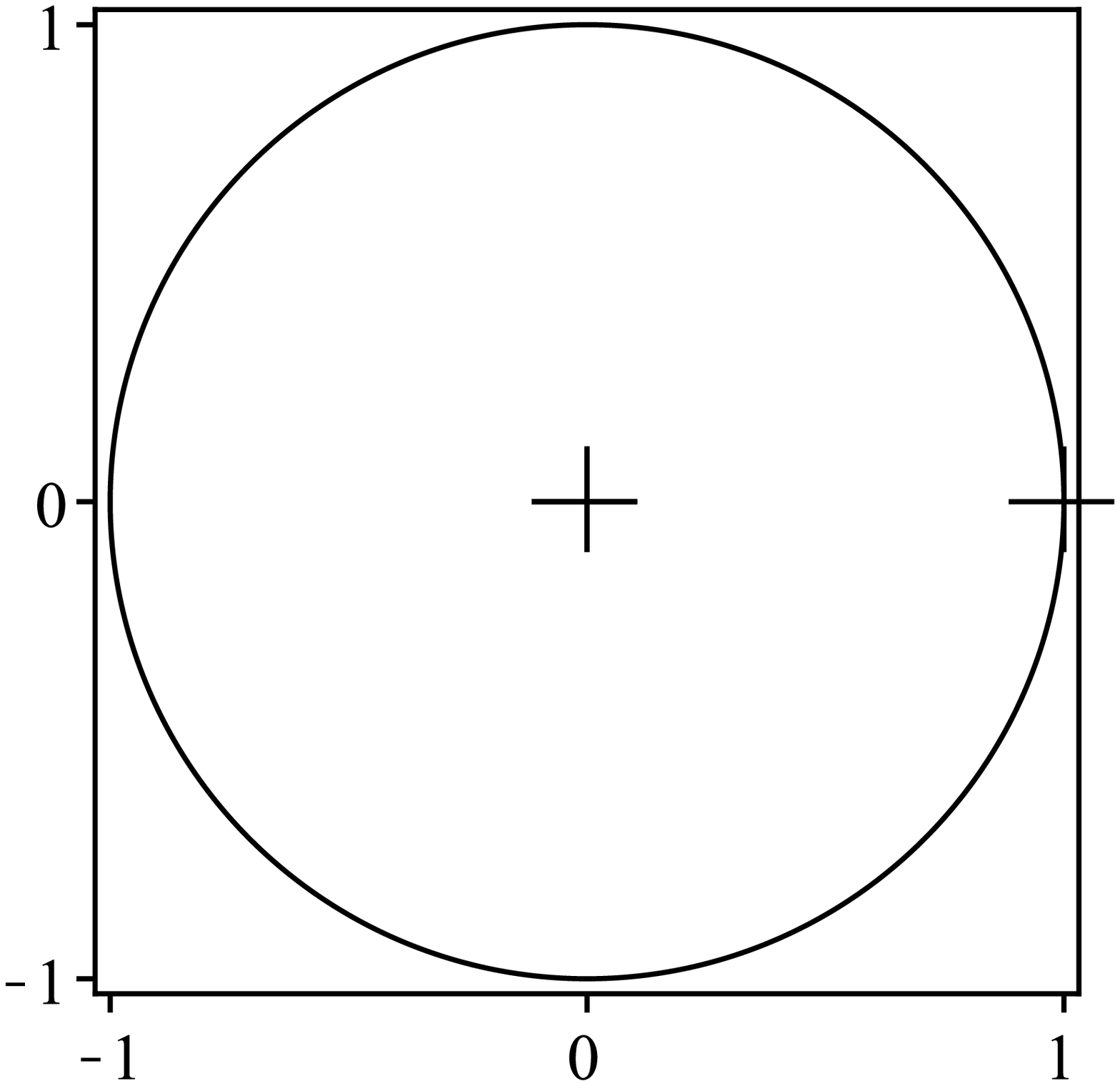} &
\includegraphics[width=.3\textwidth]{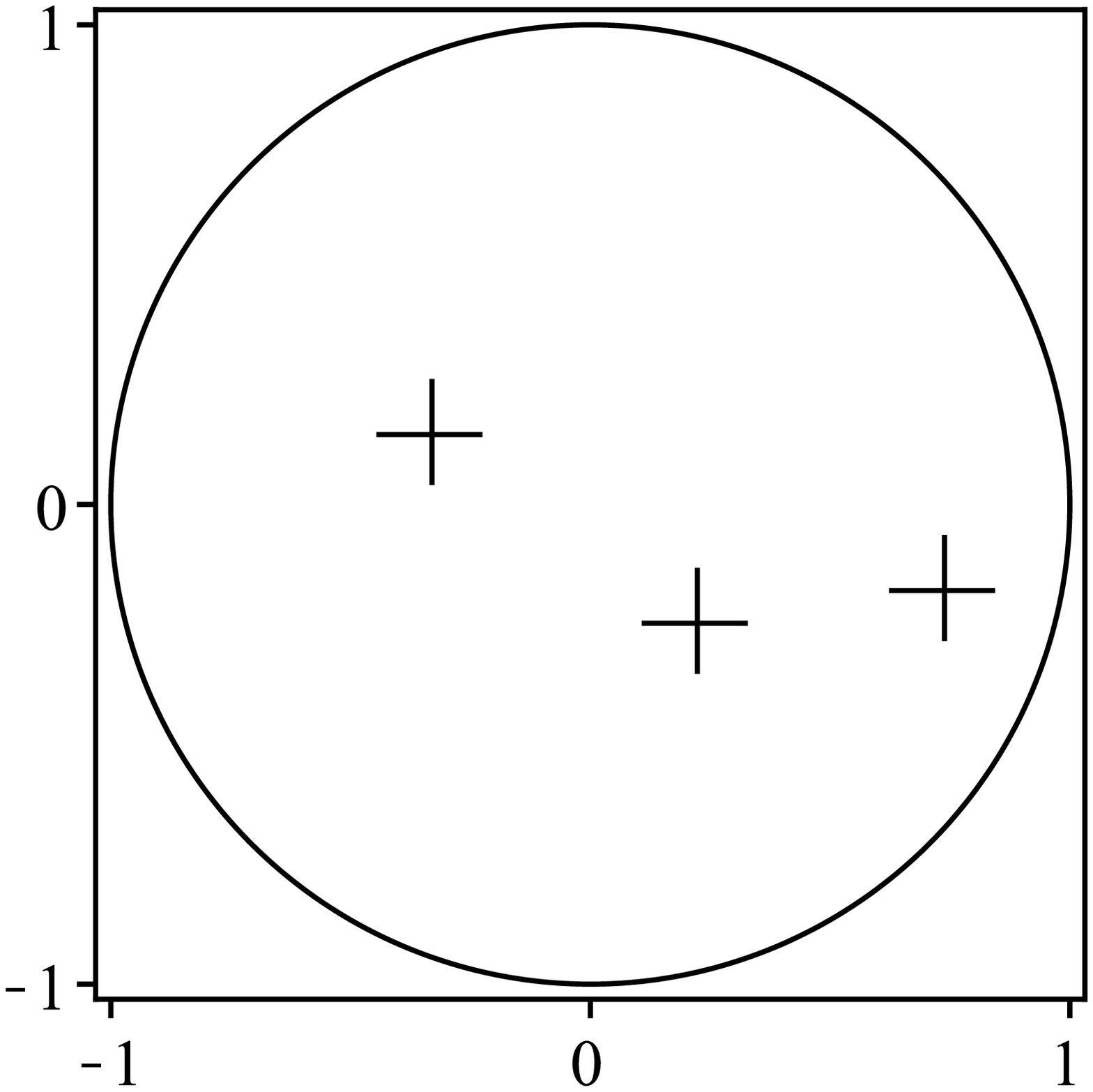} &
\includegraphics[width=.3\textwidth]{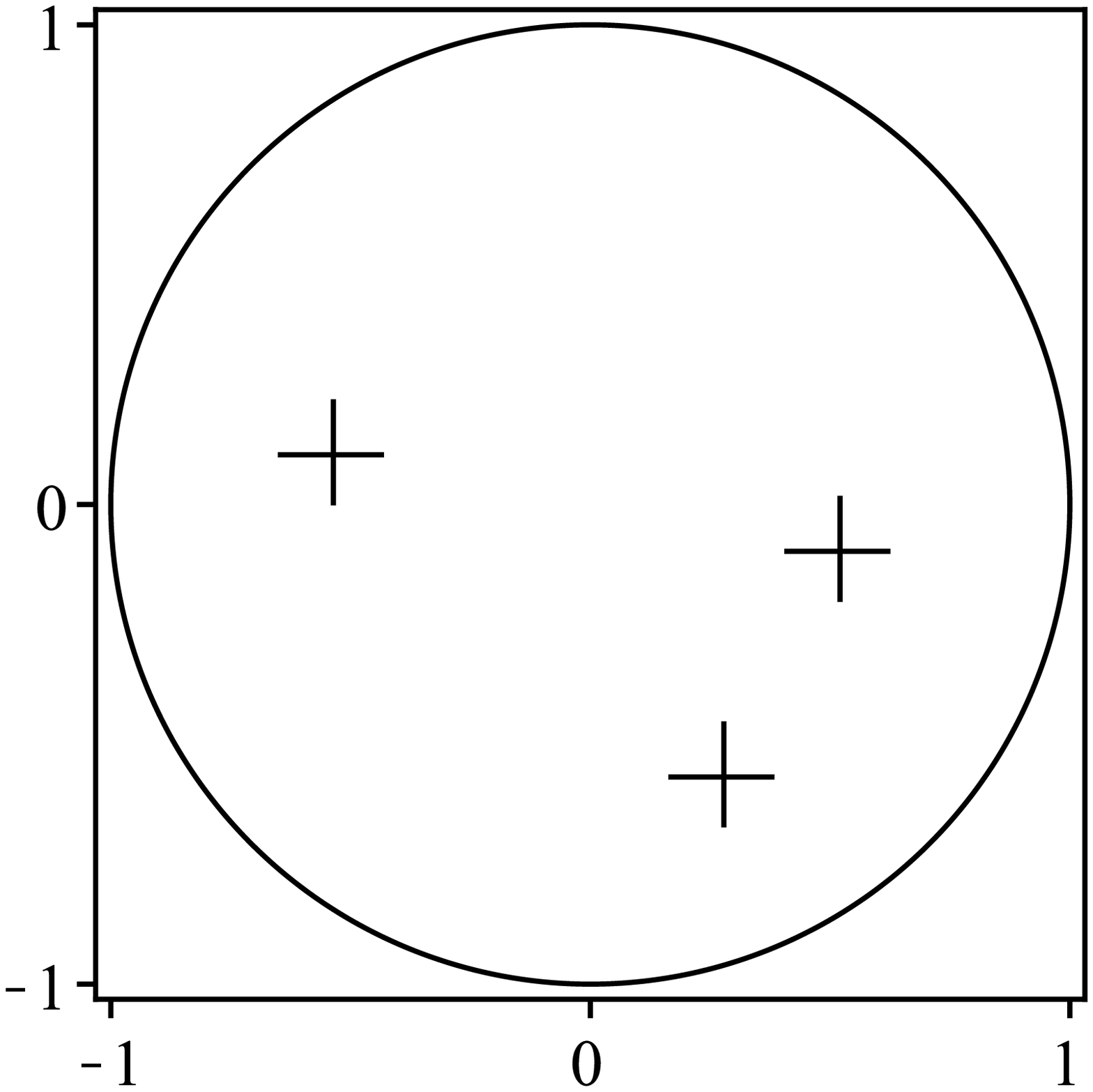} \\ \\
$\delta t = 0.6 \hat{\delta t}$ & $\delta t = 0.8 \hat{\delta t}$ & $\delta t = 1.0 \hat{\delta t}$\\
\includegraphics[width=.3\textwidth]{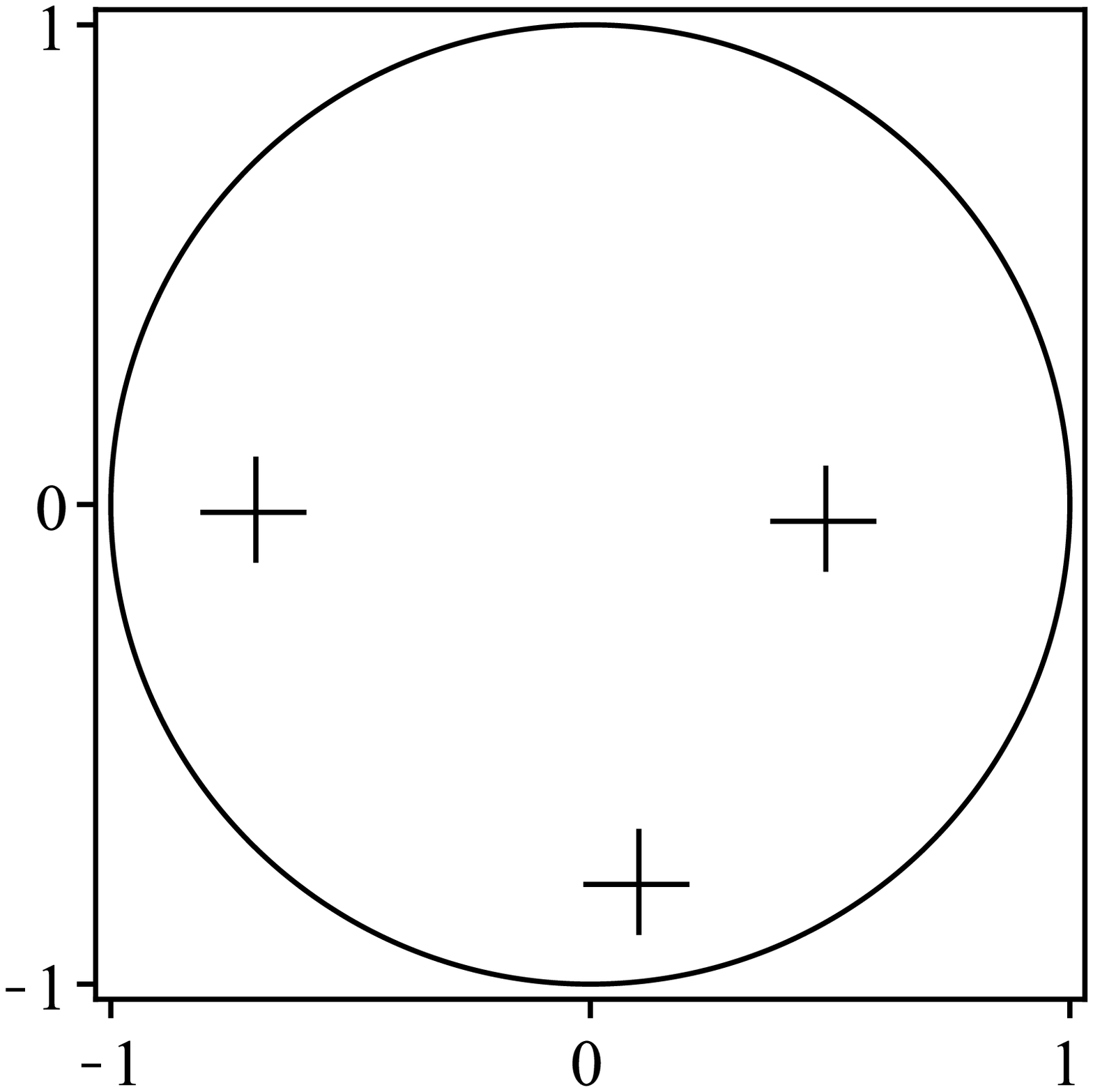} &
\includegraphics[width=.3\textwidth]{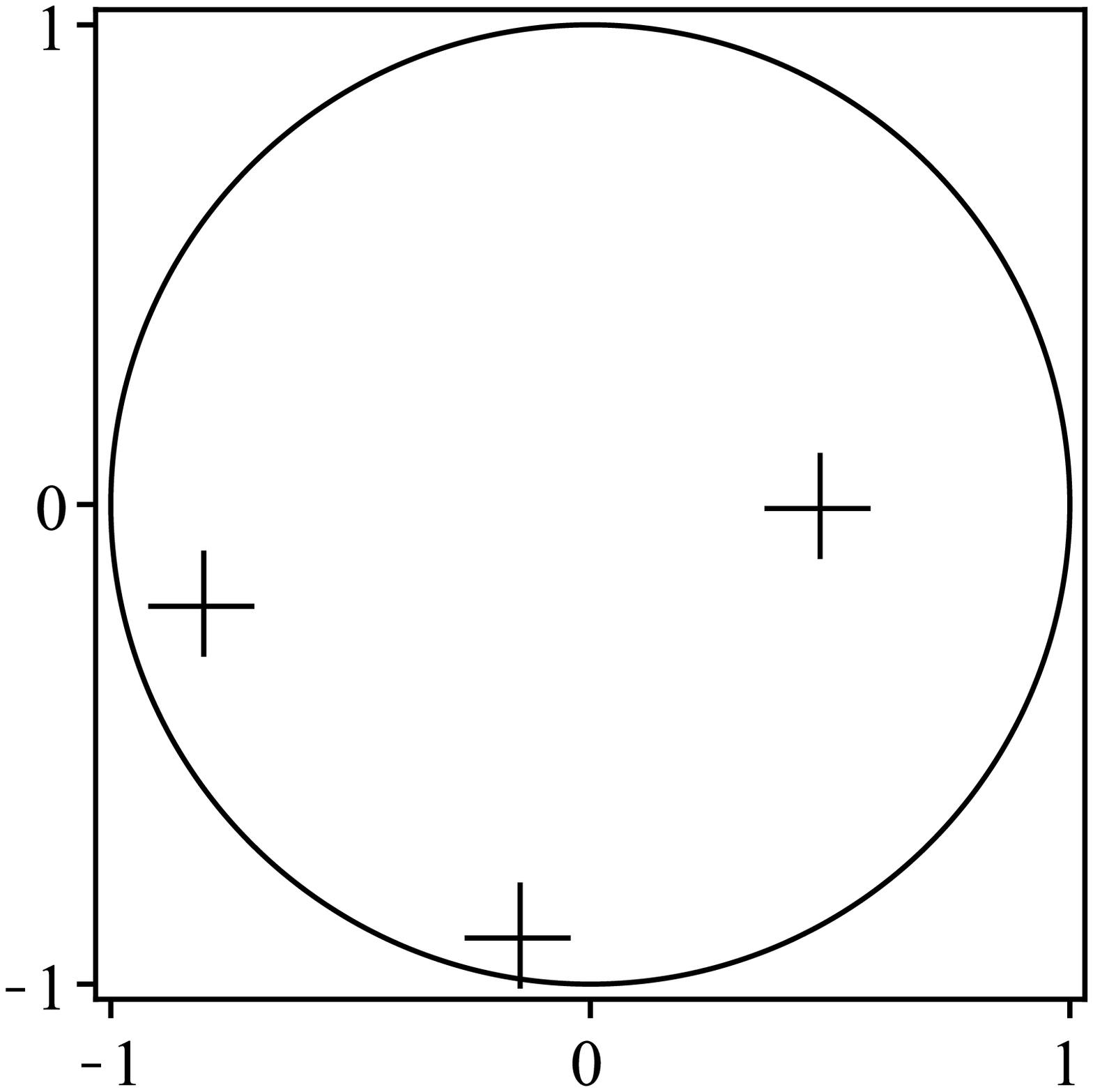} &
\includegraphics[width=.3\textwidth]{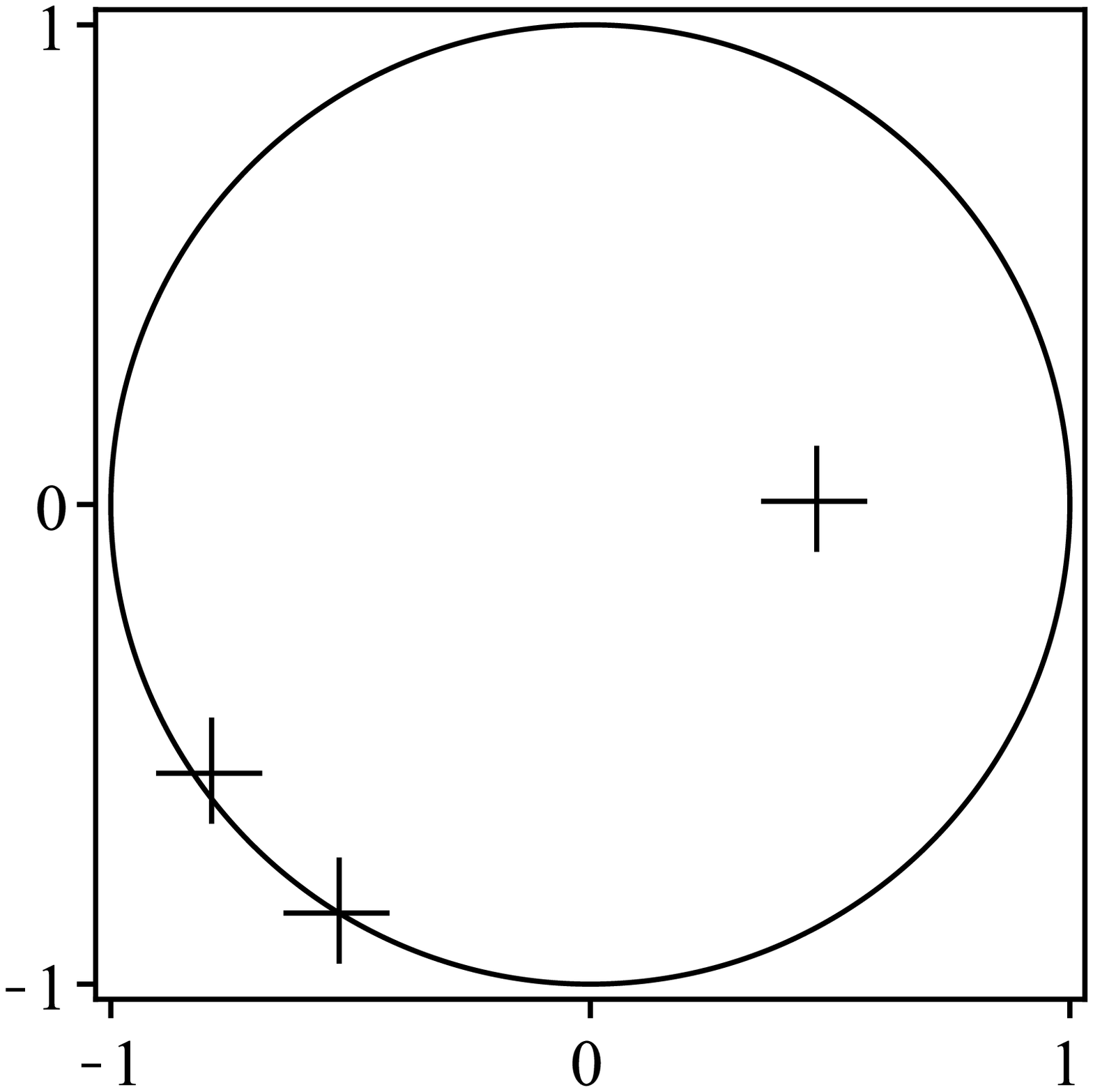}
\end{tabular}
\caption{The complex roots of $\Pi(x)$ for various $\delta t\in[0,\hat{\delta t}]$ are displayed along with the unit circle.  All roots lie strictly within the unit circle when $\delta t$ is chosen from $(0,\hat{\delta t})$.}
\label{fig_eig}
\end{figure}

\clearpage \begin{figure}[ptb]
\centering
\includegraphics[width=.6\textwidth]{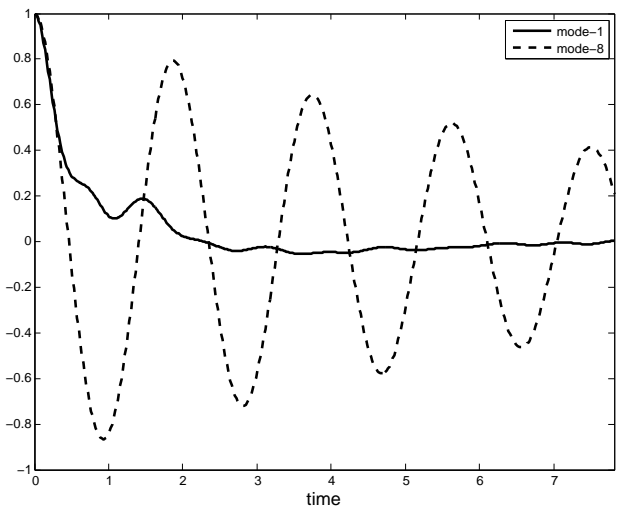}\caption{Autocorrelation
functions of Fourier modes-1 and 8.}%
\label{fig3_0}%
\end{figure}

\clearpage \begin{figure}[ptb]
\centering
\includegraphics[width=1\textwidth]{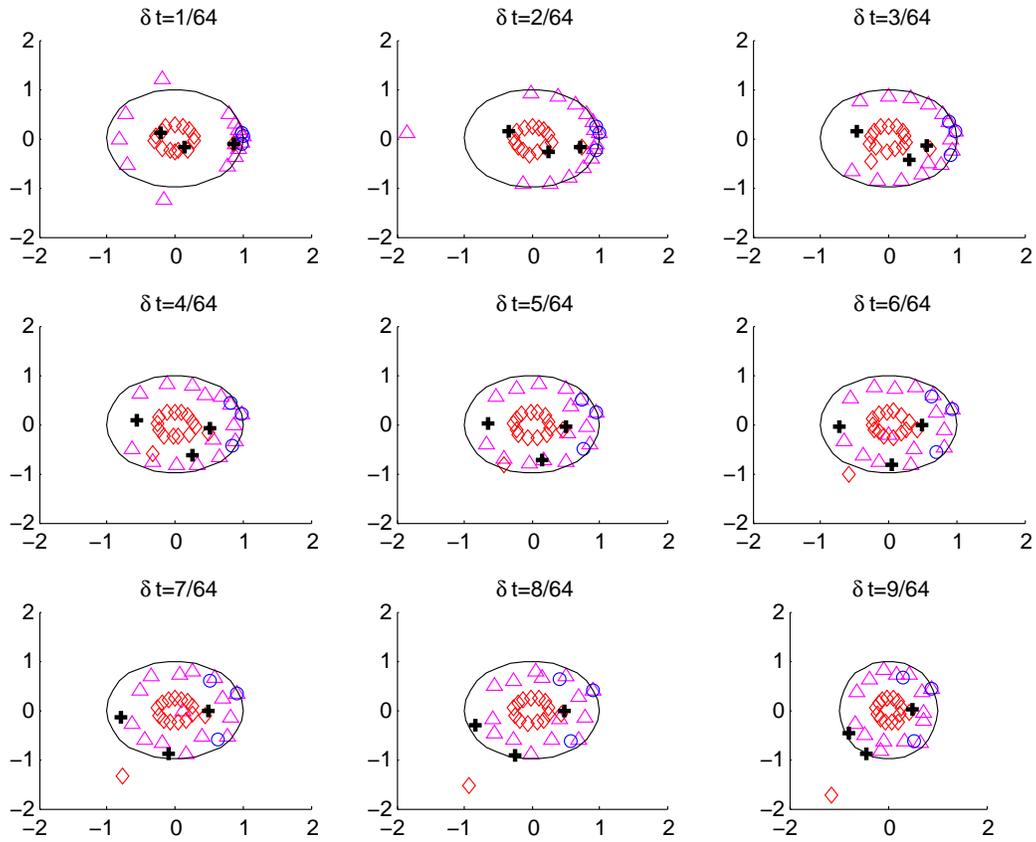}\caption{Mode-8:
Distribution of eigenvalues of AR models for various $\delta t$: AR-p model (magenta triangles); CAR-p model (red diamonds); AR-3 model (blue circles); SCAR-3 model (black plus sign).}%
\label{fig3_1}%
\end{figure}

\clearpage \begin{figure}[ptb]
\centering
\includegraphics[width=0.9\textwidth]{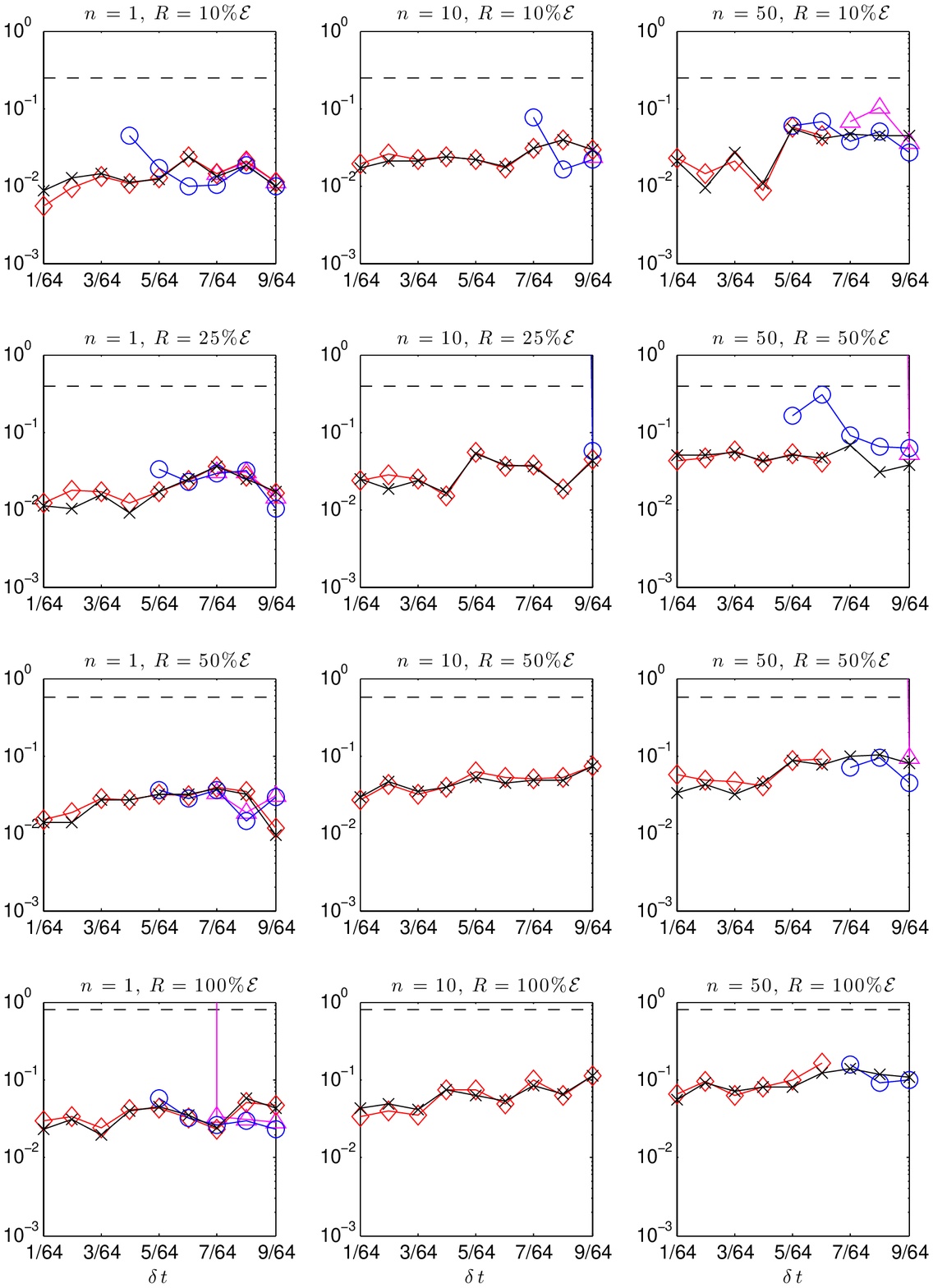}\caption{Mode-8:
Average RMSE for posterior mean estimates, $E(\hat u^{+})$, as functions of $\delta t$ for various observation
times, $\Delta t=n\delta t$, and noise variances, $R$: AR-p model (magenta
triangles); CAR-p model (red diamonds); AR-3 model (blue circles);
SCAR-3 model (black crosses); observation error (dashes).}%
\label{fig3_3}%
\end{figure}

\clearpage \begin{figure}[ptb]
\centering
\includegraphics[width=0.9\textwidth]{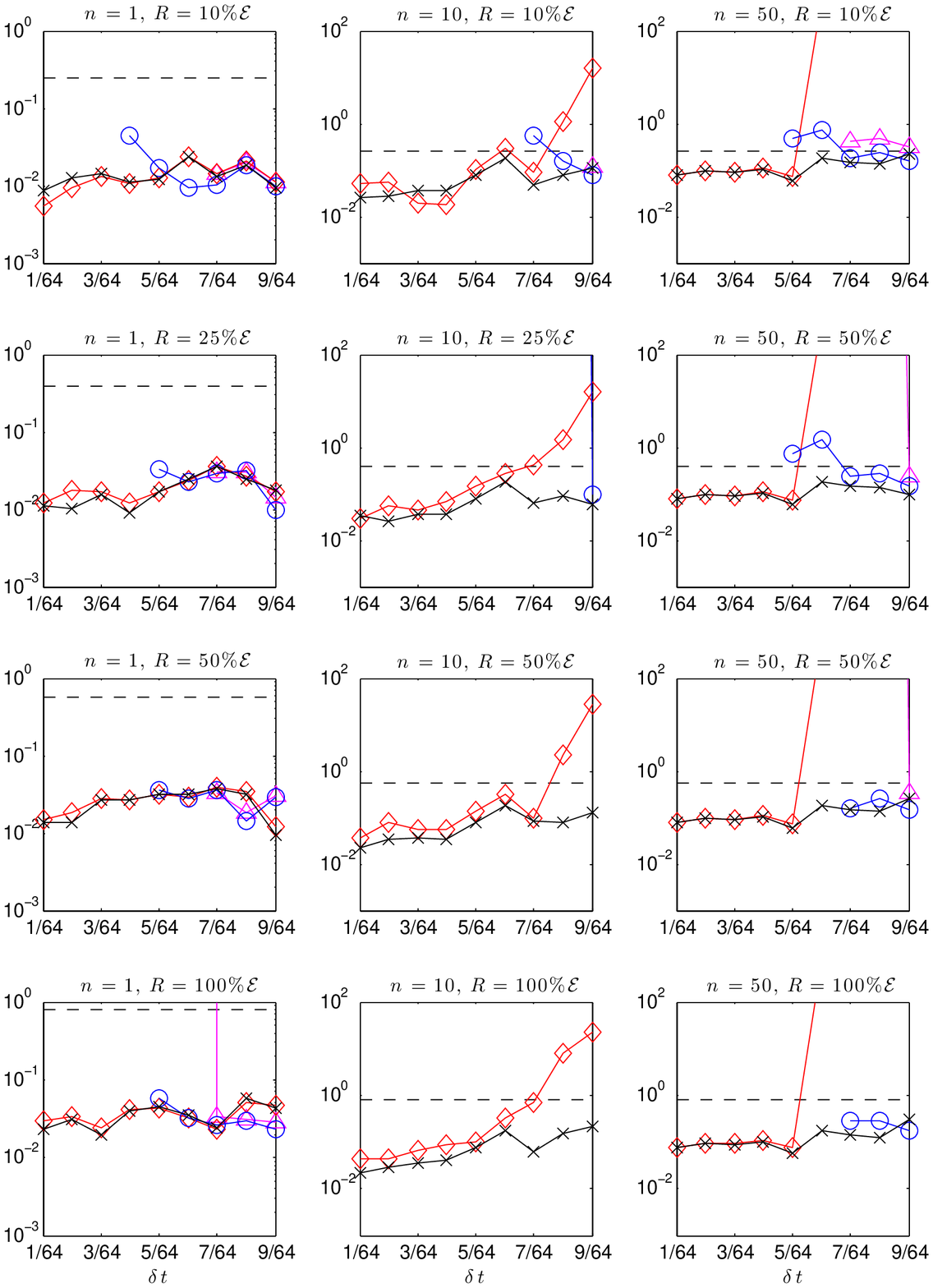}\caption{Mode-8:
Average RMSE for prior mean estimates, $E(\hat u^{-})$, as functions of $\delta t$ for various observation 
times, $\Delta t=n\delta t$, and noise variances, $R$: AR-p model (magenta
triangles); CAR-p model (red diamonds); AR-3 model (blue circles);
SCAR-3 model (black crosses); observation error (dashes).}%
\label{fig3_4}%
\end{figure}

\clearpage \begin{figure}[ptb]
\centering
\includegraphics[width=0.9\textwidth]{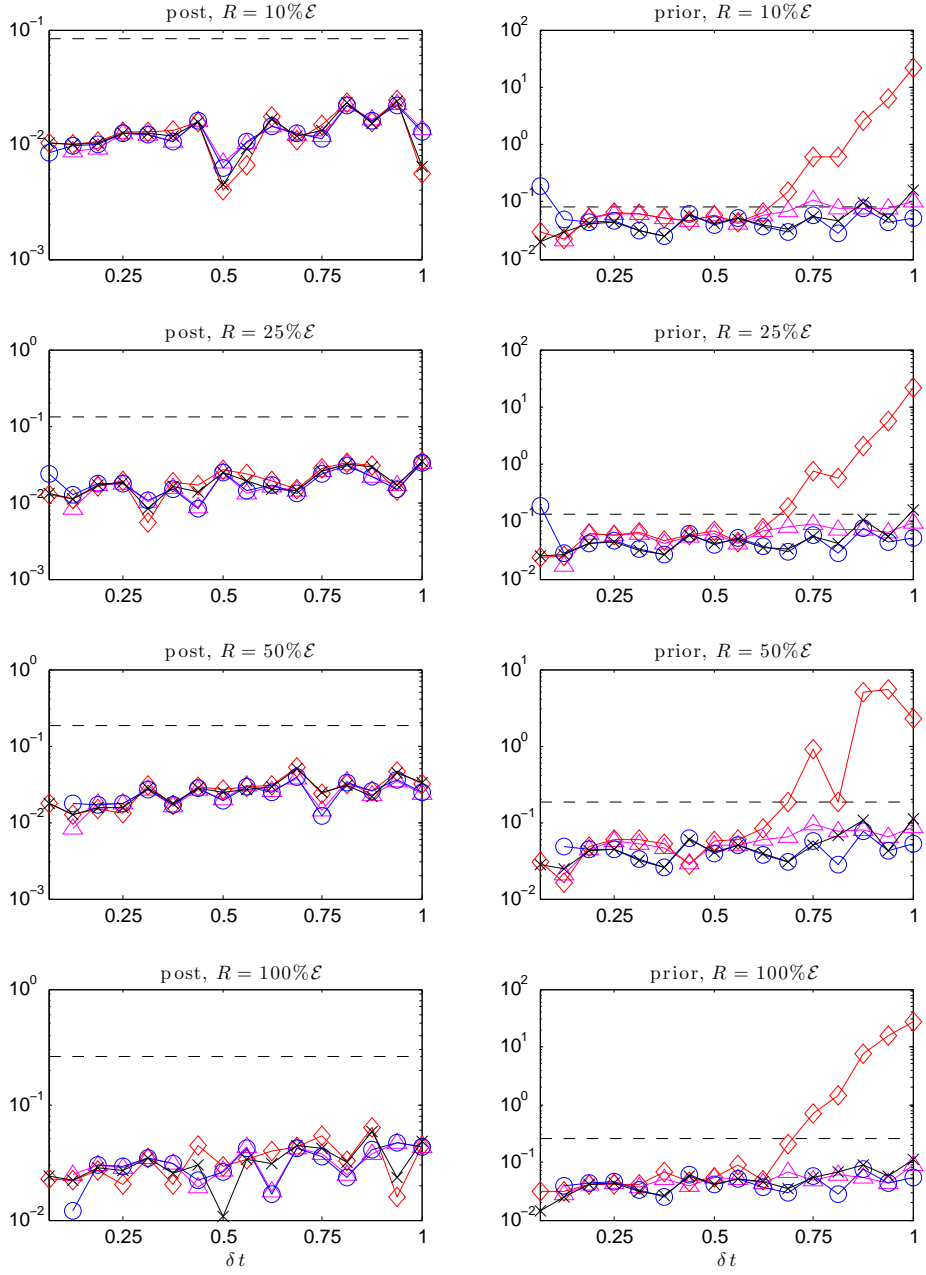}\caption{Mode-1:
Average RMSE for the posterior (first column) and prior (second column) mean estimates as functions of integration
time, $\delta t$, for observation time, $\Delta t=10\delta t$ and various
observation noise variances $R$: AR-p model (magenta
triangles); CAR-p model (red diamonds); AR-3 model (blue circles);
SCAR-3 model (black crosses); observation error (dashes).}%
\label{fig3_6}%
\end{figure}

\clearpage \begin{figure}[ptb]
\centering
\includegraphics[width=0.5\textwidth]{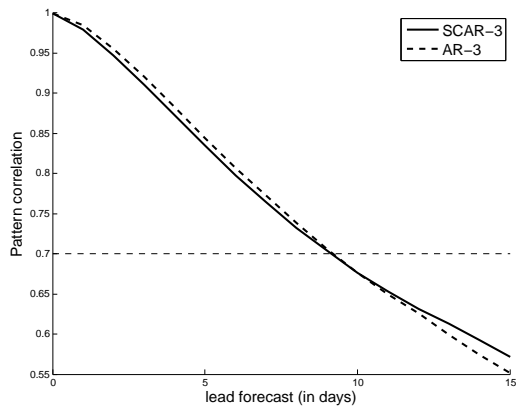}\caption{RMM index forecasting skill: Bivariate pattern correlations as functions of lead forecast (in days); obtained from daily average over a period beyond the training data set, Sept 1, 2011-Aug 31, 2012.}%
\label{pc_rmm}%
\end{figure}

\clearpage \begin{figure}[ptb]
\centering
\includegraphics[width=0.45\textwidth]{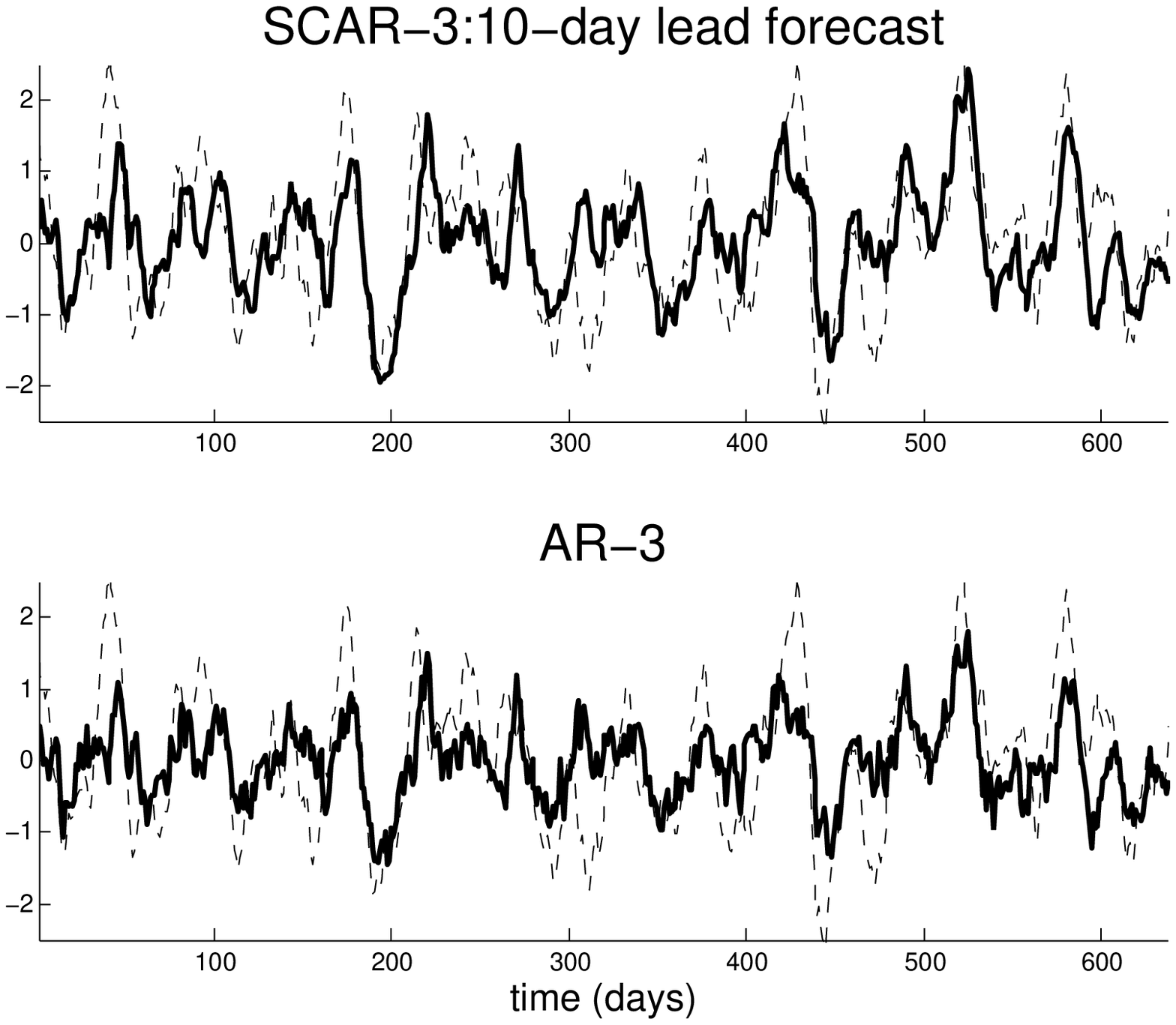}
\includegraphics[width=0.45\textwidth]{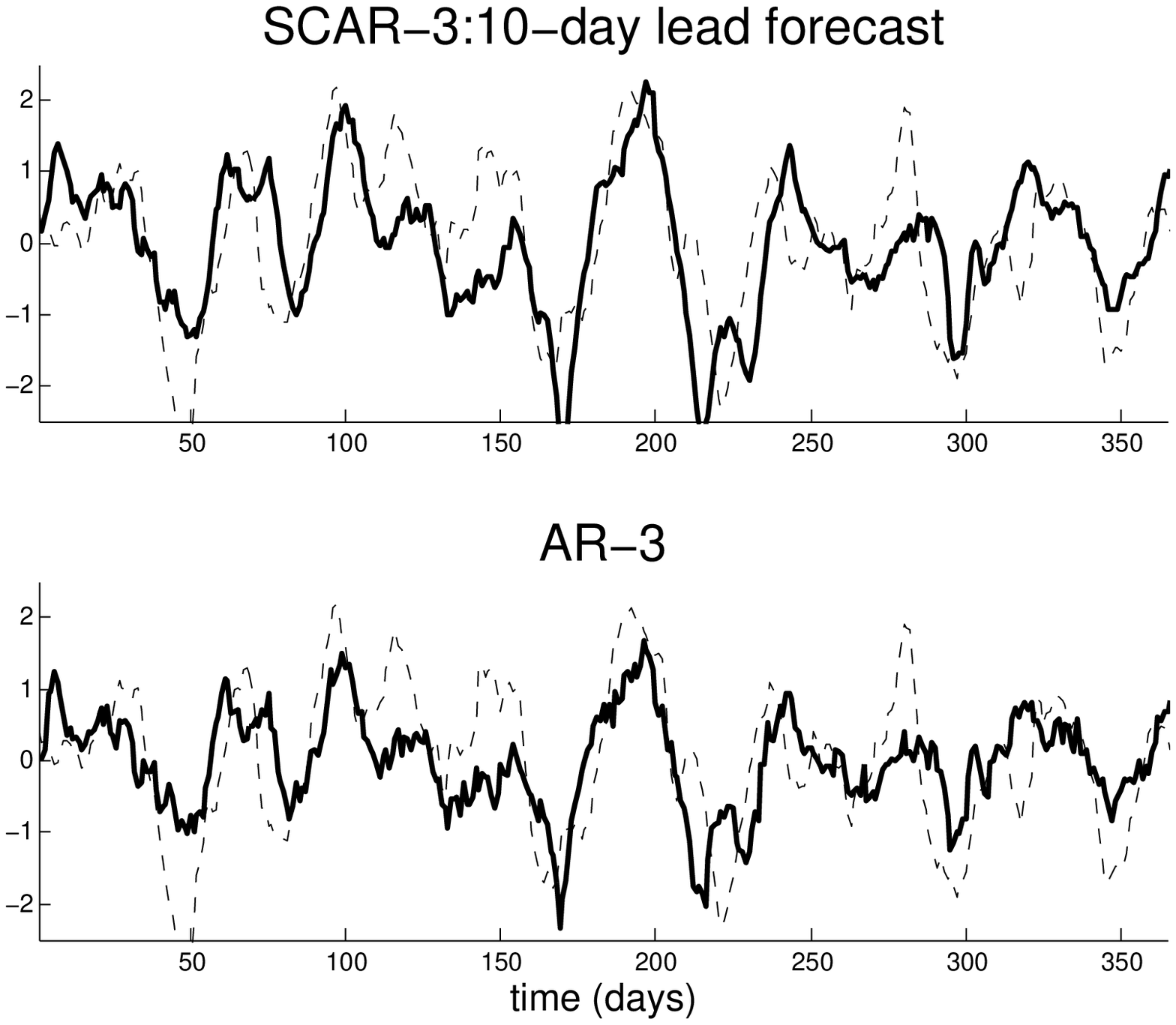}
\caption{Ten-day lead forecast for RMM1 index for periods within the training data set, Aug 1, 2005-April 30, 2007 (left), and beyond the training data set, Sept 1, 2011-Aug 31, 2012. In each panel, the observed RMM index data is denoted in dashes and the mean forecast is denoted in solid.}%
\label{fcst}%
\end{figure}


\end{document}